\newtheorem{theorem}{{Theorem}}
\newtheorem{lemma}{{Lemma}}
\newtheorem{remark}{{Remark}}
\def\marginnote#1{\setbox0=\vtop{\hsize4pc
\small\raggedright\noindent\baselineskip9pt \rightskip=0.5pc plus
1.5pc #1}\leavevmode \vadjust{\dimen0=\dp0
\kern-\ht0\hbox{\kern-4.00pc\box0}\kern-\dimen0}}
\def\lboxit#1{\vbox{\hrule\hbox{\vrule\kern6pt
\vbox{\kern6pt#1\kern6pt}\kern6pt\vrule}\hrule}}
\begin{document}
	\thispagestyle{empty}
	\title{Diagnostic Tests Before Modeling Longitudinal Actuarial Data}
	\author[a]{Yinhuan Li}
	\author[b]{Tsz Chai Fung}
	\author[b]{Liang Peng}
	\author[a]{Linyi Qian\thanks{Corresponding author. Email: lyqian@stat.ecnu.edu.cn}}
	\affil[a]{\footnotesize \emph{Key Laboratory of Advanced Theory and Application in Statistics and Data Science-MOE, School of Statistics, East China Normal University, Shanghai 200241, China}}
	\affil[b]{\footnotesize \emph{Department of Risk Management and Insurance, Georgia State University, Atlanta, GA 30303, USA}}
	\date{}
	\maketitle
	\begin{abstract}
	In non-life insurance, it is essential to understand the serial dynamics and dependence structure of the longitudinal insurance data before using them. Existing actuarial literature primarily focuses on modeling, which typically assumes a lack of serial dynamics and a pre-specified dependence structure of claims across multiple years. To fill in the research gap, we develop two diagnostic tests, namely the serial dynamic test and correlation test, to assess the appropriateness of these assumptions and provide justifiable modeling directions. The tests involve the following ingredients: i) computing the change of the cross-sectional estimated parameters under a logistic regression model and the empirical residual correlations of the claim occurrence indicators across time, which serve as the indications to detect serial dynamics; ii) quantifying estimation uncertainty using the randomly weighted bootstrap approach; iii) developing asymptotic theories to construct proper test statistics. The proposed tests are examined by simulated data and applied to two non-life insurance datasets, revealing that the two datasets behave differently.
	\end{abstract}
	\noindent\textbf{Keywords:} Insurance loss; Logistic regression; Longitudinal data; Random weighted bootstrap.
	\newpage

\section{Introduction}
In non-life insurance, understanding longitudinal insurance claim datasets is essential yet challenging for various actuarial applications, including ratemaking and risk management. Insurance companies often obtain each policyholder's claim information for multiple contract years, resulting in a longitudinal data structure. Since the insurance claims may be serially correlated due to the unobserved policyholder risk characteristics, contracts are priced based not only on the observed policyholder information but also on the past claim history, referred to as \textit{a posteriori} ratemaking. Existing actuarial literature primarily focuses on the advancements of new statistical models for longitudinal data, which mainly include the copula, random effects, and mixture models:
\begin{itemize}
\item Copula models capture the intertemporal claim dependence for a policyholder by incorporating a copula function to link the marginal claim distributions for each year. Noticeable contributions include \cite{frees2006copula}, \cite{boucher2008models}, \cite{shi2014longitudinal}, \cite{FreesLeeYang2016}, \cite{shi2016multilevel}, \cite{shi2018pair}, \cite{LeeShi2019}, and \cite{yang2019multiperil}.
\item Random effects models capture the unobserved heterogeneities among policyholders via a policyholder-level latent variable, such that the past claim history may influence the predicted future claim. It is first studied in actuarial longitudinal data by \cite{boucher2006fixed} and further expanded by, e.g., \cite{pechon2018multivariate}, \cite{jeong2020predictive}, \cite{oh2020bonus}, \cite{jeong2021multi}, \cite{oh2021multi}, and \cite{tseung2022posteriori}.
\item Mixture models classify policyholders into various risk levels, introducing the dependence of claims over time. Related works are \cite{tzougas2014optimal} and \cite{tzougas2021multivariate}.
\end{itemize}

As non-life insurance is typically a short-term product, the longitudinal actuarial datasets often exist across only a few years (\cite{FreesLeeYang2016}), introducing difficulty in modeling and validating their time-series dynamics. Therefore, it can be seen that most, if not all, of the above research works make the following assumptions:

\begin{itemize}
\item{\it (A1)} There are no serial dynamics on the model parameters, i.e., the marginal claim distribution conditioned on the covariates and the covariates' influence do not change over time $t$.
\item {\it (A2)} The serial dependence of claims follows a pre-specified pattern. For example, Gaussian copula models (\cite{shi2014longitudinal} and \cite{yang2019multiperil}) commonly have three specifications: AR(1) for decaying correlations over lag $l$, exchangeable structure for uniform correlations across $l$, and Toeplitz structure for non-zero correlations up to a fixed $l$; random effects models often assume uniform correlations.
\end{itemize}

Nonetheless, these assumptions may be questionable when we reasonably argue that the covariates cannot fully explain the claim dynamics. For example, the altering macro-environmental status, influenced by social-economic conditions, climate events, underwriting policies, and governmental measures, may simultaneously affect the risk levels of all policyholders, creating a heterogeneity of the conditional claim distributions and dependence structure over time. Violation of the assumptions implies that the models in the existing literature are misspecified, and hence the resulting future claim predictions may be misleading. As a result, it is vital to statistically test the assumptions above before modeling the dataset.

Motivated by the issue above, this paper focuses on developing two diagnostic tests to assess if assumptions (A1) and (A2) are satisfied in modeling the longitudinal claim data. To better appreciate our study, as a starting point, we first model the conditional claim occurrence probabilities by logistic regression, the first step of the ratemaking process in the existing literature, see \cite{Heras2018} and \cite{Kang2020b}. 
Then, we test the serial dynamics and dependence structure of conditional claim occurrence indicators over time. A similar analysis for other parts of a claim distribution, e.g., tail risk measures, is much involved and beyond this paper's scope. We will develop the following tests:
\begin{itemize}
\item \emph{Serial dynamic test}: Identify any structural change of conditional marginal claim occurrence probabilities over time that violates (A1).
\item \emph{Correlation test}: 
Examine if the claim occurrence indicators may be conditionally independent with any or all time lags $l$ to support the study (A2).
\end{itemize}
Rejecting the constant dynamic assumption in (A1) calls for a more delicate model for time series dynamics. However, given a short observable window for actuarial data, it is infeasible to validate any imposed non-constant dynamic structure appropriately, as a nonparametric inference is generally needed. On the other hand, a failure to reject (A1) means the use of longitudinal actuarial data is only for improving inference efficiency, as the future conditional claim probability has the same structure as the previous years. 

In either case, the claim occurrence indicators may be serially dependent across years. Hence, a proper specification of the serial dependence structure under (A2) is mandatory, and the correlation test will offer dependence modeling guidance.

We organize the paper as follows. Section \ref{sec:problem} provides the complete mathematical framework for the research problem stated above. Section \ref{sec:method} presents the methodology and asymptotic results. Section \ref{sec:data} analyzes two public actuarial datasets: the local government property insurance fund data from the state of Wisconsin and the French private motor dataset, which exhibit different behaviors. A simulation study is given in Section \ref{sec:simu} to evaluate the finite sample performance. Section \ref{sec:conclusion} concludes. All proofs are put in Section \ref{sec:proofs}.

\section{Research problem statement} \label{sec:problem}
Following the panel data structure studied in the actuarial literature, we suppose that the actuarial dataset over multiple years is $(Z_{i,t},\boldsymbol{X}_{i,t})$ for $i\in A_t$ and $t=1,\ldots,T$, where $Z_{i,t}\in\{0,1\}$ and $\boldsymbol{X}_{i,t}\in\mathbb{R}^P$ are respectively the claim occurrence indicator and the covariate vector for the $i$-th policyholder at a particular year $t$, $A_t\subseteq \{1,\ldots,n\}:=A$ is a set of policyholders being exposed in year $t$, $n$ is the total number of policyholders in the insurance portfolio, and $T$ is the total number of contract years observed. $Z_{i,t}=1$ if there is at least one claim for policyholder $i$ at time $t$, and $Z_{i,t}=0$ otherwise. Because a policyholder may not renew his/her policy for some years, it is possible that $A_t\neq A_s$ for some $t\neq s$. Hence, the number of policies in year $t$ may not equal that in year $s$, implying that the panel data is imbalanced. 

When one is interested in forecasting the risk of the aggregate loss of a new policyholder, it is critical to model the conditional probability of nonzero claims given the covariates, i.e., $p_{i,t}=P(Z_{i,t}=1|\boldsymbol{X}_{i,t})$. Logistic regression is commonly employed by assuming that the conditional random variable $Z_{i,t}$ given $\boldsymbol{X}_{i,t}$ has a Bernoulli distribution with the claim (occurrence) probability $p_{i,t}:=p_{t}(\boldsymbol{X}_{i,t})$ modeled by a logit link:
\begin{equation}\label{mod}
Z_{i,t}|\boldsymbol{X}_{i,t} \sim \text{Bernoulli}(p_{i,t}),\quad \log\frac{p_{i,t}}{1-p_{i,t}}=\alpha_t+\boldsymbol{\beta}^{\top}_t\boldsymbol{X}_{i,t},
\end{equation}
where $\alpha_t\in \mathbb{R}$ and $\boldsymbol{\beta}_t\in\mathbb{R}^{P}$ are the regression parameters.

\begin{remark}
While the claim probability contains substantial information for the claim behaviors of the policyholder, it is insufficient for ratemaking purposes, which requires full specifications on both claim frequency and severity. For forecasting the conditional claim frequency, one needs to specify a discrete distribution, such as Poisson and Negative Binomial, which affects the variance of the claim frequency and hence may influence the ratemaking decision. For forecasting the conditional Value-at-Risk of the aggregate loss, one can employ quantile regression at an adjusted risk level estimated from the above logistic regression; see  \cite{Kudryavtsev2009},
\cite{Heras2018}, \cite{Kang2020b}, and \cite{Kang2020a}. As the main goal of this paper is to provide a fundamental starting point to test the serial dynamic and dependence assumptions, we refrain from fully specifying the claim frequency and severity distributions.
\end{remark}

On the other hand, if one wants to forecast the conditional claim probability for a future year, it becomes necessary to understand the time dynamics using actuarial data over multiple years. That is, 
we want to forecast the conditional probability $P(Z_{i,T+1}=1|\boldsymbol{X}_{i,T+1})$, which needs to model the dynamic structure of $P(Z_{i,t}=1|\boldsymbol{X}_{i,t})$ over time, i.e., modeling $\alpha_t$ and $\boldsymbol{\beta}_t$ in (\ref{mod}).
When a parametric form is imposed for $\alpha_t$ and $\boldsymbol{\beta}_t$, say, a linear form over time, validation often requires a nonparametric inference and becomes infeasible because $T$ is usually small.  A common technique to overcome this validation challenge assumes (A1), where there are no time dynamics. In the context of the first step of the ratemaking process (i.e., modeling $Z_{i,t}|\boldsymbol{X}_{i,t}$), (A1) is equivalent to the following null hypothesis: 
\begin{equation}\label{null1}
 H_0: \alpha_1=\cdots=\alpha_T=\alpha~\text{and}~ \boldsymbol{\beta}_1=\cdots=\boldsymbol{\beta}_T=\boldsymbol{\beta}.\end{equation}
Under this setting, using actuarial data over multiple years only improves inference efficiency and forecast accuracy rather than modeling time dynamics.  

This paper focuses on developing a diagnostic test for (\ref{null1}) in modeling conditional claim occurrence probability. A similar study for risk forecast is much involved and will be investigated separately.
More specifically,  Section \ref{sec:method:test1} develops a diagnostic test, called the \emph{serial dynamic test}, for a constant dynamic in (\ref{null1}), which can be employed before using actuarial data over multiple years. In this test, we assume that policyholders are independent, i.e., $(Z_{i,t}, \boldsymbol{X}_{i,t})$ and $(Z_{j,t}, \boldsymbol{X}_{j,t})$ are independent for $i\neq j$. However, to reflect that the claim occurrences of a policyholder are serially dependent as extensively modeled by the literature, we allow for the dependence between the conditional variable of $Z_{i,t}$ given $\boldsymbol{X}_{i,t}$ and that of $Z_{i,s}$ given $\boldsymbol{X}_{i,s}$ for $t\neq s$. The dependence between $\boldsymbol{X}_{i,t}$ and $\boldsymbol{X}_{i,s}$ is also allowed. Further, the distribution of $\boldsymbol{X}_{i,t}$ may be different from that of $\boldsymbol{X}_{i,s}$ for $t\neq s$ because the insurance portfolio may shift over time, e.g., policyholders get older over time.
To construct a Hotelling's T-test, we adopt the random weighted bootstrap method in \cite{JinYingWei2001} and \cite{Zhu2016} to quantify the uncertainty of maximum likelihood estimation for the logistic regressions, which is better for our imbalanced longitudinal data.


When the test does not reject the null hypothesis (\ref{null1}), it may be justifiable to model the longitudinal data without incorporating any time dynamics as in the literature, and the past data can be used to improve the inference for parameters and get better future claim occurrence predictions. 

Regardless of the existence of non-stationary dynamics for the conditional claim occurrence probability over time, the claim occurrence may still be serially dependent across years. As a result, it is also crucial to understand the serial dependence between $Z_{i,t}|\boldsymbol{X}_{i,t}$ and $Z_{i,t+l}|\boldsymbol{X}_{i,t+l}$ for $t=1,\ldots,T-l$ and $l=1,\ldots, T-1$, which helps identify the appropriate class of dependence models. Here, we propose to test the conditional independence of the claims given by
\begin{equation}\label{null_indep}
H_0: Z_{i,1},\ldots,Z_{i,T} ~\text{are jointly independent conditioned on}~\bm{X}_{i,1},\ldots,\bm{X}_{i,T}
\end{equation}
for a full comparison of all $T$ years, or
\begin{equation}\label{null_indep_pair}
H_0: Z_{i,t}~\text{and}~Z_{i,t+l}~\text{are independent conditioned on}~\bm{X}_{i,t}~\text{and}~\bm{X}_{i,t+l}
\end{equation}
for a pairwise comparison with a time lag of $l$. In Section \ref{sec:method:test2}, we will develop a \emph{correlation test} for the conditional independence null hypothesis (\ref{null_indep}) or (\ref{null_indep_pair}). The diagnostic results will provide some useful dependence modeling guidance.
\begin{itemize}
\item If (\ref{null_indep}) is not rejected, then it suffices to assume conditional independence among all observations, and an ordinary regression model may be more parsimonious and powerful in prediction than any dependence models in the literature.
\item If (\ref{null_indep_pair}) is rejected for a small but not large $l$, then short-term memory structures, including the AR(1) and Toeplitz structures, will be more suitable in capturing the serial dependence.
\item If (\ref{null_indep_pair}) is rejected even for a large $l$, a long-term memory structure like a uniform correlation may be needed.
\end{itemize}

\section{Methodology and asymptotic results}	\label{sec:method}

In this section, we develop two diagnostic tests, \emph{serial dynamic test} and \emph{correlation test}, for longitudinal actuarial data. Throughout this paper, we define $n_t$ and $n_{s,t}$ as the number of elements in $A_t$ and $A_s\cap A_t$, respectively. Hence, we have $n_{t,t}=n_t$. Also, put $\boldsymbol{\bar{X}}_{i,t}=\left( 1,\boldsymbol{X}_{i,t}^\top\right) ^\top$, where $\top$ denotes the transpose of a matrix or vector.

\subsection{Serial dynamic test} \label{sec:method:test1}
Suppose that model (\ref{mod}) holds. Since we do not want to specify a particular dependence structure among $Z_{i,1},\cdots,Z_{i,T}$ given $\boldsymbol{X}_{i,1},\ldots,\boldsymbol{X}_{i,T}$, we do not estimate $\alpha_t$'s and $\boldsymbol{\beta}_t$'s jointly. Instead, we use the $t$-th year's data to estimate $\boldsymbol{\gamma}_t=(\alpha_t, \boldsymbol{\beta}_t^{\top})^{\top}$ by the logistic regression estimation
	\begin{equation}\label{lre}
		\hat{\boldsymbol{\gamma}}_t=\arg\max\sum_{i\in A_t}\{Z_{i,t}\log(p_{i,t})+(1-Z_{i,t})\log(1-p_{i,t})\}.
	\end{equation}
Since (\ref{null1}) is equivalent to	$H_0: \boldsymbol{\gamma}=\boldsymbol{0}$ with
	$\boldsymbol{\gamma}=(\boldsymbol{\gamma}_2^{\top}-\boldsymbol{\gamma}_1^{\top},\cdots,\boldsymbol{\gamma}_T^{\top}-\boldsymbol{\gamma}_1^{\top})^{\top}$, we estimate $\boldsymbol{\gamma}$ by $\hat{\boldsymbol{\gamma}}=(\hat{\boldsymbol{\gamma}}_2^{\top}-\hat{\boldsymbol{\gamma}}_1^{\top},\cdots,\hat{\boldsymbol{\gamma}}_T^{\top}-\hat{\boldsymbol{\gamma}}_1^{\top})^{\top}$. To derive the asymptotic limit of $\hat{\boldsymbol{\gamma}}$, we employ the following regularity conditions.
\begin{itemize}
\item{\it (C1)} $(Z_{i,s},\boldsymbol{X}_{i,s}^{\top})^{\top}$ and $(Z_{j,t}, \boldsymbol{X}_{j,t}^{\top})^{\top}$ are independent  when $i\neq j$ for any $s,t=1,\dots,T$. $Z_{i,t}|(\boldsymbol{X}_{i,s},\boldsymbol{X}_{i,t})$ has the same distribution as  $Z_{i,t}|\boldsymbol{X}_{i,t}$  if $i\in A_s\cap A_t$ for any $s,t=1,\dots,T$.
\item {\it (C2)}
		For any fixed $t=1,\cdots,T$, \textnormal{$\left\lbrace (Z_{i,t},\boldsymbol{X}_{i,t}^{\top})^{\top}\right\rbrace_{i=1,\ldots,n_t} $} is a sequence of independent and identically distributed random vectors. 
\item {\it (C3)} $E(||\boldsymbol{X}_{i}||^{2+\delta})<\infty$ for some $\delta>0$, where $\boldsymbol{X}_i=(\boldsymbol{X}_{i,1}^\top,\dots,\boldsymbol{X}_{i,T}^\top)^\top$. 
\item{\it (C4)}
	\textnormal {$\Sigma_t=E\left\lbrace p_{i,t}(1-p_{i,t})\bar{\boldsymbol{X}}_{i,t}\bar{\boldsymbol{X}}_{i,t}^\top\right\rbrace $ is positive definite for $t=1,\dots,T$.}
\item {\it (C5)}
		\textnormal{$n_{s,t}/n\to a_{s,t}\in[0, 1]$ as $n\to \infty$ for $s, t=1,\dots,T$ with $a_{s,s}=a_s>0$.}
\end{itemize}

	\begin{theorem}\label{Th1}
		Under conditions (C1)-(C5) and model (\ref{mod}), we have
		\[\sqrt n (\hat{\boldsymbol{\gamma}}-\boldsymbol{\gamma})=(a_2^{-1}\left( \Sigma_2^{-1}\boldsymbol{W}_2\right) ^\top- a_1^{-1}\left(\Sigma_1^{-1}\boldsymbol{W}_1\right)^\top,\cdots,a_T^{-1} \left( \Sigma_T^{-1}\boldsymbol{W}_T\right) ^\top-a_1^{-1}\left( \Sigma_1^{-1}\boldsymbol{W}_1\right) ^\top)^{\top}+o_p(1)\]
		as $n\to\infty$,
		where the joint normal limit of $\boldsymbol{W}_1,\cdots,\boldsymbol{W}_T$ has mean 0 and the following covariance 
		\[E\{\boldsymbol{W}_s\boldsymbol{W}_t^{\top}\}=a_{s,t}E\{(p_{i,s,t}-p_{i,s}p_{i,t})\bar{\boldsymbol{X}}_{i,s}\bar{\boldsymbol{X}}_{i,t}^{\top}\}\]
		with $p_{i,s,t}=P(Z_{i,s}=1,Z_{i,t}=1|\boldsymbol{X}_{i,s},\boldsymbol{X}_{i,t})$ and $p_{i,s,s}=p_{i,s}$. We denote  the asymptotic covariance of $\hat{\boldsymbol{\gamma}}$ as $\Sigma$.
	\end{theorem}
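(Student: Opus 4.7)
The plan is to follow the standard route for maximum likelihood asymptotics: obtain a Bahadur-type linear expansion of each cross-sectional logistic estimator $\hat{\boldsymbol{\gamma}}_t$, establish a joint central limit theorem for the vector of year-wise scores across $t=1,\ldots,T$ with the correct cross-year covariance, and then assemble the $T-1$ differences via Slutsky's theorem.

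For the pointwise expansion, the first-order condition of (\ref{lre}) is $\sum_{i\in A_t}(Z_{i,t}-p_{i,t})\bar{\boldsymbol{X}}_{i,t}=\boldsymbol{0}$ at $\hat{\boldsymbol{\gamma}}_t$. Under (C2)--(C4), standard M-estimation arguments give consistency of $\hat{\boldsymbol{\gamma}}_t$ for each $t$, and a mean-value expansion of the score around $\boldsymbol{\gamma}_t$ combined with a law of large numbers for the empirical Hessian (normalized by $n$, whose probability limit is $a_t\Sigma_t$ using (C5)) yields
\[\sqrt n\,(\hat{\boldsymbol{\gamma}}_t-\boldsymbol{\gamma}_t)=a_t^{-1}\Sigma_t^{-1}\boldsymbol{W}_t^{(n)}+o_p(1),\qquad \boldsymbol{W}_t^{(n)}:=\frac{1}{\sqrt n}\sum_{i\in A_t}(Z_{i,t}-p_{i,t})\bar{\boldsymbol{X}}_{i,t}.\]
The factor $a_t^{-1}$ is exactly what compensates for standardizing by $\sqrt n$ rather than $\sqrt{n_t}$. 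Stacking and differencing over $t$ then reduces the problem to the joint limit of $(\boldsymbol{W}_1^{(n)},\ldots,\boldsymbol{W}_T^{(n)})$.

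For the joint CLT, I rewrite each score on a common policyholder index, $\boldsymbol{W}_t^{(n)}=n^{-1/2}\sum_{i=1}^n \boldsymbol{\xi}_{i,t}$ with $\boldsymbol{\xi}_{i,t}=\mathbb{I}(i\in A_t)(Z_{i,t}-p_{i,t})\bar{\boldsymbol{X}}_{i,t}$. By (C1) the stacked vectors $(\boldsymbol{\xi}_{i,1}^\top,\ldots,\boldsymbol{\xi}_{i,T}^\top)^\top$ are independent across $i$, so the multivariate Lindeberg--Feller CLT applies -- the Lindeberg condition follows from the $(2+\delta)$-moment bound in (C3) together with $|Z_{i,t}-p_{i,t}|\leq 1$. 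For the $(s,t)$-block of the limit covariance, I condition on $(\boldsymbol{X}_{i,s},\boldsymbol{X}_{i,t})$ and use the second clause of (C1) to obtain $E[Z_{i,t}\mid \boldsymbol{X}_{i,s},\boldsymbol{X}_{i,t}]=p_{i,t}$; the inner conditional covariance then collapses to $p_{i,s,t}-p_{i,s}p_{i,t}$, and averaging over $A_s\cap A_t$ with $n_{s,t}/n\to a_{s,t}$ from (C5) produces the claimed $a_{s,t}E\{(p_{i,s,t}-p_{i,s}p_{i,t})\bar{\boldsymbol{X}}_{i,s}\bar{\boldsymbol{X}}_{i,t}^\top\}$.

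The main obstacle I expect is the bookkeeping in the joint step: because the index sets $A_t$ differ across $t$ (imbalanced panel), one must express the scores on a common index through the indicators $\mathbb{I}(i\in A_t)$ and verify that the conditional-independence clause in (C1) -- which is stated only for pairs $(s,t)$ -- really does reduce $E[Z_{i,s}Z_{i,t}\mid \boldsymbol{X}_{i,s},\boldsymbol{X}_{i,t}]$ to $p_{i,s,t}$, while the marginal pieces reduce to $p_{i,s}p_{i,t}$. Once the cross-year covariance is identified, the Lindeberg verification, the Hessian law of large numbers, and the final algebraic assembly of the differences $\hat{\boldsymbol{\gamma}}_t-\hat{\boldsymbol{\gamma}}_1$ are routine consequences of (C3)--(C5) and Slutsky's theorem.
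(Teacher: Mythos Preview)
Your proposal is correct and follows essentially the same route as the paper's proof: the paper likewise rewrites each score on the common index via $I(i\in A_t)$, verifies a Lyapunov-type bound using the $(2+\delta)$-moment in (C3), applies the CLT for independent non-identically distributed summands (via Cram\'er--Wold) to obtain the joint limit of the year-wise scores with cross-covariance $a_{s,t}E\{(p_{i,s,t}-p_{i,s}p_{i,t})\bar{\boldsymbol{X}}_{i,s}\bar{\boldsymbol{X}}_{i,t}^\top\}$, uses the law of large numbers for the Hessian normalized by $n$, and then assembles the differences by Taylor expansion and Slutsky/delta-method. The only cosmetic difference is that the paper invokes Cram\'er--Wold plus a univariate Lyapunov CLT (citing Serfling) rather than a multivariate Lindeberg--Feller statement directly.
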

	
	To estimate the asymptotic covariance of $\hat{\boldsymbol{\gamma}}$, we adopt the random weighted bootstrap method in \cite{JinYingWei2001} and \cite{Zhu2016}  as follows.
	\begin{itemize}
		\item Step \romannumeral1 1) Draw a random sample with size $n$ from a distribution with mean one and variance one, say the standard exponential distribution. Denote them by $\{\delta_i^b\}_{i=1}^n$.
		\item Step \romannumeral1 2) Solve
\begin{equation} \label{eq:gamma_boot}
\hat{\boldsymbol{\gamma}}_t^b=\arg\max\sum_{i\in A_t}\delta_i^b\{Z_{i,t}\log(p_{i,t})+(1-Z_{i,t})\log(1-p_{i,t})\}
\end{equation}
		and write $\hat{\boldsymbol{\gamma}}^b=(\hat{\boldsymbol{\gamma}}_2^{b\top}-\hat{\boldsymbol{\gamma}}_1^{b\top},\cdots,\hat{\boldsymbol{\gamma}}_T^{b\top}-\hat{\boldsymbol{\gamma}}_1^{b\top})^{\top}.$
		\item Step \romannumeral1 3) Repeat the above two steps $B$ times to get $\{\hat{\boldsymbol{\gamma}}^b\}_{b=1}^B$. 
	\end{itemize}
	
\begin{theorem}\label{Th2}
Under the conditions of Theorem \ref{Th1},
$\sqrt n (\hat{\boldsymbol{\gamma}}^b-\hat{\boldsymbol{\gamma}})$ and $\sqrt n(\hat{\boldsymbol{\gamma}}-\boldsymbol{\gamma})$ have the same normal limit as $n\to\infty$.	
\end{theorem}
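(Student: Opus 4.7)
The plan is to follow the standard strategy for proving bootstrap consistency of an $M$-estimator: first establish conditional consistency of $\hat{\boldsymbol{\gamma}}_t^b$, then Taylor-expand the weighted score equation around $\hat{\boldsymbol{\gamma}}_t$ to obtain a linear representation, and finally identify the joint limit of the bootstrap scores via a conditional central limit theorem. The target is a representation for $\sqrt{n}(\hat{\boldsymbol{\gamma}}^b - \hat{\boldsymbol{\gamma}})$ that matches, term for term, the one derived in Theorem \ref{Th1} for $\sqrt{n}(\hat{\boldsymbol{\gamma}} - \boldsymbol{\gamma})$; since the limiting Gaussian is characterized by its covariance, this is all that is needed.

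To start, I would argue that $\hat{\boldsymbol{\gamma}}_t^b - \hat{\boldsymbol{\gamma}}_t = o_p(1)$ by invoking concavity: since $\delta_i^b>0$ almost surely and $E\delta_i^b=1$, the weighted log-likelihood in (\ref{eq:gamma_boot}) is strictly concave in $\boldsymbol{\gamma}_t$ and, by a weighted law of large numbers, converges uniformly on compacts to the same population limit that identifies $\boldsymbol{\gamma}_t$. Using that both $\hat{\boldsymbol{\gamma}}_t^b$ and $\hat{\boldsymbol{\gamma}}_t$ solve their respective score equations, a first-order Taylor expansion of $p_{i,t}(\cdot)$ around $\hat{\boldsymbol{\gamma}}_t$ followed by another weighted LLN replacing the weighted Hessian by $a_t\Sigma_t$ should yield
\[\sqrt{n}(\hat{\boldsymbol{\gamma}}_t^b - \hat{\boldsymbol{\gamma}}_t) = a_t^{-1}\Sigma_t^{-1}\cdot n^{-1/2}\sum_{i\in A_t}(\delta_i^b - 1)\{Z_{i,t} - p_{i,t}\}\bar{\boldsymbol{X}}_{i,t} + o_p(1),\]
with the factor $\delta_i^b - 1$ (rather than $\delta_i^b$) arising because the unweighted score at $\hat{\boldsymbol{\gamma}}_t$ already vanishes.

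It remains to identify the joint limiting distribution of the bootstrap scores $\boldsymbol{W}_t^b := n^{-1/2}\sum_{i\in A_t}(\delta_i^b - 1)(Z_{i,t} - p_{i,t})\bar{\boldsymbol{X}}_{i,t}$, $t=1,\ldots,T$, conditional on the data. Because the $\delta_i^b$'s are i.i.d.\ with mean one and variance one, and crucially the \emph{same} weight is shared across years for a given policyholder $i$, the conditional cross-covariance of $\boldsymbol{W}_s^b$ and $\boldsymbol{W}_t^b$ collapses to
\[n^{-1}\sum_{i\in A_s\cap A_t}(Z_{i,s}-p_{i,s})(Z_{i,t}-p_{i,t})\bar{\boldsymbol{X}}_{i,s}\bar{\boldsymbol{X}}_{i,t}^\top,\]
which, by (C1), (C3), and (C5) combined with an ordinary LLN, converges in probability to $a_{s,t}E\{(p_{i,s,t} - p_{i,s}p_{i,t})\bar{\boldsymbol{X}}_{i,s}\bar{\boldsymbol{X}}_{i,t}^\top\}$ -- exactly $E\{\boldsymbol{W}_s\boldsymbol{W}_t^\top\}$ from Theorem \ref{Th1}. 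A conditional Lindeberg--Feller CLT applied to the triangular array in $\delta_i^b$ then gives the same joint normal limit as $(\boldsymbol{W}_1,\ldots,\boldsymbol{W}_T)$, and combining this with the linear representation above yields the theorem.

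The main obstacle, I expect, will be the conditional CLT step. The subtlety is that conditioning on the data leaves a triangular array of summands whose coefficients are themselves random, so Lindeberg's condition must be verified in probability rather than almost surely; this uses the $(2+\delta)$-moment condition (C3) to bound $\max_{i\in A_t}\|\bar{\boldsymbol{X}}_{i,t}\| = o_p(n^{1/2})$ together with the tail of $\delta_i^b$ to control the negligibility of maximal summands. This is the type of argument handled in \cite{JinYingWei2001} and \cite{Zhu2016}, so the approach is to adapt their machinery to the imbalanced panel setting indexed by $A_t$. A secondary technical point is showing uniform-in-$t$ control of the remainder in the Taylor expansion using consistency of $\hat{\boldsymbol{\gamma}}_t^b$, but this is routine given (C3) and (C4).
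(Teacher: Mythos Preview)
Your proposal is correct and reaches the same linear representation as the paper, but the execution differs in two places worth noting. First, the paper Taylor-expands the weighted score around the true $\boldsymbol{\gamma}_t$ (not around $\hat{\boldsymbol{\gamma}}_t$), obtaining $\sqrt{n}(\hat{\boldsymbol{\gamma}}_t^b-\boldsymbol{\gamma}_t)=a_t^{-1}\Sigma_t^{-1}\,n^{-1/2}\partial L_t^b(\boldsymbol{\gamma}_t)/\partial\boldsymbol{\gamma}_t+o_p(1)$, and then simply subtracts the analogous expansion for $\hat{\boldsymbol{\gamma}}_t$ from Theorem~\ref{Th1}; the $(\delta_i^b-1)$ factor emerges from this subtraction rather than from the vanishing of the unweighted score at $\hat{\boldsymbol{\gamma}}_t$. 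Second, and more substantively, the paper does \emph{not} invoke a conditional Lindeberg--Feller CLT: it treats the summands $(\delta_i^b-1)I(i\in A_t)(Z_{i,t}-p_{i,t})\bar{\boldsymbol{X}}_{i,t}$ as an unconditional i.i.d.-type array (independent across $i$, with $\delta_i^b$ independent of the data, $E(\delta_i^b-1)=0$, $E(\delta_i^b-1)^2=1$) and reuses the Lyapunov-type argument from (\ref{pfTh1-2a})--(\ref{pfTh1-2c}) verbatim. This unconditional route is shorter and suffices for the theorem as stated (same marginal limit), whereas your conditional approach is the standard machinery for bootstrap validity and would deliver the stronger conclusion of conditional weak convergence given the data; the price is the extra work you correctly anticipate in verifying Lindeberg's condition in probability.
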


From the theorem above, we estimate the asymptotic covariance of $\sqrt n(\hat{\boldsymbol{\gamma}}-\boldsymbol{\gamma})$ by
\[\hat\Sigma=\frac nB\sum_{b=1}^B(\hat{\boldsymbol{\gamma}}^b-\hat{\boldsymbol{\gamma}})(\hat{\boldsymbol{\gamma}}^b-\hat{\boldsymbol{\gamma}})^{\top},\]
and define the \emph{aggregated serial dynamic test statistic} for (\ref{null1}) by
\begin{equation} \label{eq:stat:serial_agg}
\Delta_n=n\hat{\boldsymbol{\gamma}}^{\top}\hat\Sigma^{-1}\hat{\boldsymbol{\gamma}}.
\end{equation}
	
\begin{theorem}\label{Th3}
Under the conditions of Theorem \ref{Th1} and the null hypothesis (\ref{null1}), 
\[\Delta_n\overset{d}{\to}\chi^2((T-1)\times (P+1))~\text{as}~ n\to\infty ~\text{and}~ B \to\infty.\]
\end{theorem}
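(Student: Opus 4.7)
The plan is to show that $\Delta_n$ is an asymptotic quadratic form in a Gaussian vector whose covariance is consistently estimated by $\hat\Sigma$, so that the quadratic form converges to a chi-square on $(T-1)(P+1)$ degrees of freedom by the standard Wishart/Mahalanobis argument. The dimension $(T-1)(P+1)$ comes directly from the length of $\hat{\boldsymbol{\gamma}} = (\hat{\boldsymbol{\gamma}}_2^\top - \hat{\boldsymbol{\gamma}}_1^\top, \ldots, \hat{\boldsymbol{\gamma}}_T^\top - \hat{\boldsymbol{\gamma}}_1^\top)^\top$, which stacks $T-1$ contrasts of length $P+1$.

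First I would invoke Theorem \ref{Th1} under the null $\boldsymbol{\gamma} = \boldsymbol{0}$ to obtain $\sqrt{n}\,\hat{\boldsymbol{\gamma}} \overset{d}{\to} N(\boldsymbol{0},\Sigma)$ as $n\to\infty$, where $\Sigma$ is the $((T-1)(P+1))\times((T-1)(P+1))$ asymptotic covariance matrix identified in that theorem. By condition (C4) and the linear-contrast structure of $\hat{\boldsymbol{\gamma}}$, one can verify that $\Sigma$ is positive definite (the block giving the covariance of $a_t^{-1}\Sigma_t^{-1}\boldsymbol{W}_t$ is positive definite for each $t$, and differencing with $\hat{\boldsymbol{\gamma}}_1$ yields a full-rank linear transformation of the $T$ independent-component limits modulo the common baseline, which preserves positive definiteness for the $(T-1)$ contrasts).

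Next I would show $\hat\Sigma \overset{p}{\to} \Sigma$ as $n,B\to\infty$. Theorem \ref{Th2} gives that, conditional on the data, $\sqrt{n}(\hat{\boldsymbol{\gamma}}^b - \hat{\boldsymbol{\gamma}})$ shares the same $N(\boldsymbol{0},\Sigma)$ limit as $\sqrt{n}(\hat{\boldsymbol{\gamma}} - \boldsymbol{\gamma})$. For a fixed large $n$, the bootstrap replicates $\{\hat{\boldsymbol{\gamma}}^b\}_{b=1}^B$ are i.i.d.\ conditional on the data, so by the conditional strong law,
\[
\frac{n}{B}\sum_{b=1}^B (\hat{\boldsymbol{\gamma}}^b - \hat{\boldsymbol{\gamma}})(\hat{\boldsymbol{\gamma}}^b - \hat{\boldsymbol{\gamma}})^\top \;\overset{a.s.}{\to}\; n\,\mathrm{Var}^*\bigl(\hat{\boldsymbol{\gamma}}^b - \hat{\boldsymbol{\gamma}}\bigr) \quad \text{as } B\to\infty,
\]
where $\mathrm{Var}^*$ denotes the conditional variance. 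Combining this with Theorem \ref{Th2} and a uniform integrability argument on the squared bootstrap errors (which follows from condition (C3) and the consistency of $\hat{\boldsymbol{\gamma}}_t$ and $\hat{\boldsymbol{\gamma}}_t^b$ together with the smoothness of the logistic score), the sequence of $n$-scaled conditional variances converges in probability to $\Sigma$. Hence $\hat\Sigma \overset{p}{\to} \Sigma$ along the iterated limit $B\to\infty$, $n\to\infty$.

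Finally, I would conclude with Slutsky and the continuous mapping theorem. Since $\Sigma$ is invertible, $\hat\Sigma^{-1} \overset{p}{\to} \Sigma^{-1}$, and with $\sqrt{n}\,\hat{\boldsymbol{\gamma}} \overset{d}{\to} \boldsymbol{U} \sim N(\boldsymbol{0},\Sigma)$,
\[
\Delta_n = (\sqrt{n}\,\hat{\boldsymbol{\gamma}})^\top \hat\Sigma^{-1} (\sqrt{n}\,\hat{\boldsymbol{\gamma}}) \;\overset{d}{\to}\; \boldsymbol{U}^\top \Sigma^{-1} \boldsymbol{U} \;\sim\; \chi^2\bigl((T-1)(P+1)\bigr),
\]
by the usual fact that $\boldsymbol{U}^\top \Sigma^{-1} \boldsymbol{U}$ is chi-square on its dimension when $\boldsymbol{U}$ is centered Gaussian with nonsingular covariance $\Sigma$. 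The main obstacle I anticipate is the rigorous justification of $\hat\Sigma \overset{p}{\to} \Sigma$: weak convergence of the bootstrap distribution alone does not deliver convergence of second moments, so the step genuinely requires establishing uniform integrability of $\|\sqrt{n}(\hat{\boldsymbol{\gamma}}^b - \hat{\boldsymbol{\gamma}})\|^2$ conditional on the data. This is where condition (C3), together with the standard boundedness of the logistic link and its derivatives, is used to control tail contributions to the conditional second moment uniformly in $n$.
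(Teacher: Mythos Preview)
Your proposal is correct and follows essentially the same route as the paper's (one-line) proof: invoke Theorem~\ref{Th1} for the asymptotic normality of $\sqrt{n}\,\hat{\boldsymbol{\gamma}}$ under the null, use Theorem~\ref{Th2} to obtain $\hat\Sigma\overset{p}{\to}\Sigma$, and finish by Slutsky and the continuous mapping theorem. You are in fact more careful than the paper in flagging that weak convergence from Theorem~\ref{Th2} does not by itself yield second-moment convergence, so a uniform-integrability argument is needed; one small imprecision is your appeal to ``independent-component limits'' in the positive-definiteness step---the $\boldsymbol{W}_t$'s are in general correlated (their cross-covariance is $a_{s,t}E\{(p_{i,s,t}-p_{i,s}p_{i,t})\bar{\boldsymbol{X}}_{i,s}\bar{\boldsymbol{X}}_{i,t}^{\top}\}$), so nonsingularity of $\Sigma$ requires a separate argument, which the paper also leaves implicit.
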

	
From the theorem above, we reject the null hypothesis of (\ref{null1}) at level $a$ whenever $\Delta_n>\chi^2_{1-a,(T-1)\times (P+1)}$, where $\chi^2_{1-a,(T-1)\times (P+1)}$ denotes the $(1-a)$-quantile of a chi-squared distribution with $(T-1)\times (P+1)$ degrees of freedom. Alternatively, one can develop a \emph{pairwise} serial dynamic test to detect the structural difference of conditional claim distributions between years $s$ and $t$ with a test statistic
\begin{equation} \label{eq:stat:serial_pair}
\Delta_{s,t,n}=n(\hat{\boldsymbol{\gamma}}_s-\hat{\boldsymbol{\gamma}}_t)^{\top}\hat\Sigma_{s,t}^{-1}(\hat{\boldsymbol{\gamma}}_s-\hat{\boldsymbol{\gamma}}_t),
\end{equation}
where
\[\hat\Sigma_{s,t}=\frac{n}{B}\sum_{b=1}^{B}\left\lbrace (\hat{\boldsymbol{\gamma}}^b_s-\hat{\boldsymbol{\gamma}}^b_t)-(\hat{\boldsymbol{\gamma}}_s-\hat{\boldsymbol{\gamma}}_t)\right\rbrace \left\lbrace (\hat{\boldsymbol{\gamma}}^b_s-\hat{\boldsymbol{\gamma}}^b_t)-(\hat{\boldsymbol{\gamma}}_s-\hat{\boldsymbol{\gamma}}_t)\right\rbrace ^{\top}.\]
It follows from Theorems \ref{Th1} to \ref{Th3} that $\Delta_{s,t,n}\overset{d}{\to}\chi^2(P+1)$ as $n\to\infty$ and $B \to\infty$, which can be employed to test for no serial change between years $s$ and $t$ as above.
\begin{remark}
	The proposed serial dynamic test is different from a standard Wald test in two perspectives. Firstly, the serial dynamic test caters to imbalanced longitudinal claim data, i.e., $A_s\neq A_t$ for $s\neq t$, which is not treated in a standard Wald test. Secondly, a standard Wald test requires estimating parameters $(\bm{\gamma}_1^\top,\dots,\bm{\gamma}_T^\top)^\top$ by a conditional joint likelihood function, which requires a specification of the dependence structure of $\{Z_{i,t}|\bm{X}_{i,t}\}_{t=1}^{T}$ over time. On the other hand, the proposed serial dynamic test estimates $\bm{\gamma}_1,\dots,\bm{\gamma}_T$ by maximizing the conditional marginal likelihood function separately for each $t=1,\ldots,T$ without putting any restrictions on the serial dependence. This minimizes assumptions and reduces the computational burden.
\end{remark}
\subsection{Correlation test} \label{sec:method:test2}
To test the null hypotheses (\ref{null_indep}) and (\ref{null_indep_pair}), we first construct the empirical correlation of the residuals of the claim indicators between time $s$ and $t$:
\begin{equation} \label{eq:corr_pair}
	\hat{\rho}_{s,t}=\frac{1}{n_{s,t}}\sum_{i\in A_{s}\cap A_t}\left(\frac{Z_{i,s}-\hat{p}_{i,s}}{\sqrt{\hat{p}_{i,s}(1-\hat{p}_{i,s})}}\right)\left(\frac{Z_{i,t}-\hat{p}_{i,t}}{\sqrt{\hat{p}_{i,t}(1-\hat{p}_{i,t})}}\right),
\end{equation}
where $\hat{p}_{i,t}=\exp( \hat{\alpha}_t+\hat{\boldsymbol{\beta}}_t^{\top}\boldsymbol{X}_{i,t}) /\{1+\exp( \hat{\alpha}_t+\hat{\boldsymbol{\beta}}_t^{\top}\boldsymbol{X}_{i,t}) \}$ is the predicted claim probability under model (\ref{mod}), and $(\hat{\alpha}_t,\hat{\bm{\beta}}_t^\top)^\top$ are the estimated parameters from (\ref{lre}).  A large magnitude of $\hat{\rho}_{s,t}$ is an indication that $Z_{i,t}$ and $Z_{i,s}$ are dependent conditional on the covariates. Hence, $\hat{\rho}_{s,t}$ assesses the pairwise dependence. We also denote $\rho_{s,t}=E\left\lbrace Corr(Z_{i,s},Z_{i,t}|\bm{X}_{i,s},\bm{X}_{i,t})\right\rbrace $ as the true expected correlation of the claim indicators. The asymptotic limit of $\hat{\rho}_{s,t}$ further requires the following conditions:
\begin{itemize}
\item {\it (C6)} \textnormal{$a_{s,t}\in (0,1]$, where $a_{s,t}$ is given by condition (C5).}
\item {\it (C7)} \textnormal{$\sup_{(\tilde{\boldsymbol{\gamma}}_1^{\top},\cdots,\tilde{\boldsymbol{\gamma}}_T^{\top})^{\top}\in\Omega}E\left\lbrace \prod_{t=1}^T\exp\left(|\tilde{\boldsymbol{\gamma}}_t^\top\bar{\boldsymbol{X}}_{i,t}|\right)\right\rbrace <\infty$ for some $\Omega$ being a neighborhood of the true value of $(\boldsymbol{\gamma}_1^{\top},\cdots,\boldsymbol{\gamma}_T^{\top})^{\top}$.}
\end{itemize}

\begin{theorem} \label{Th4}
Under conditions (C1)-(C7) and model (\ref{mod}), we have
\[\sqrt{n}(\hat{\rho}_{s,t}-\rho_{s,t})\overset{d}{\to}N(0, \lambda_{s,t}),\]
where $\lambda_{s,t}$ is provided by the proof.
\end{theorem}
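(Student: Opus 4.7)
The plan is to write $\hat{\rho}_{s,t}$ as a plug-in version of an infeasible residual correlation $\tilde{\rho}_{s,t}$ that uses the true parameters, Taylor-expand the difference about $(\boldsymbol{\gamma}_s,\boldsymbol{\gamma}_t)$, and combine this with the asymptotic linearization of $\hat{\boldsymbol{\gamma}}_s$ and $\hat{\boldsymbol{\gamma}}_t$ already supplied by the proof of Theorem \ref{Th1}. Concretely, set
\[
h_i(\boldsymbol{\gamma}_s,\boldsymbol{\gamma}_t) \;=\; \frac{Z_{i,s} - p_{i,s}(\boldsymbol{\gamma}_s)}{\sqrt{p_{i,s}(\boldsymbol{\gamma}_s)(1-p_{i,s}(\boldsymbol{\gamma}_s))}}\,\cdot\,\frac{Z_{i,t} - p_{i,t}(\boldsymbol{\gamma}_t)}{\sqrt{p_{i,t}(\boldsymbol{\gamma}_t)(1-p_{i,t}(\boldsymbol{\gamma}_t))}}
\]
and $\tilde{\rho}_{s,t} = n_{s,t}^{-1}\sum_{i \in A_s \cap A_t} h_i(\boldsymbol{\gamma}_s,\boldsymbol{\gamma}_t)$. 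Under (C1), $E[h_i \mid \boldsymbol{X}_{i,s}, \boldsymbol{X}_{i,t}] = \mathrm{Corr}(Z_{i,s},Z_{i,t}\mid \boldsymbol{X}_{i,s},\boldsymbol{X}_{i,t})$, so $E[h_i] = \rho_{s,t}$, and the classical CLT applied to the i.i.d.\ sum over $A_s \cap A_t$ (using (C2) and (C6)) gives $\sqrt{n}(\tilde{\rho}_{s,t} - \rho_{s,t}) \overset{d}{\to} N(0, a_{s,t}^{-1}\mathrm{Var}(h_i))$.

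Next I would expand $\hat{\rho}_{s,t} - \tilde{\rho}_{s,t}$ by a one-term Taylor expansion. Using $\partial p_{i,u}/\partial \boldsymbol{\gamma}_u = p_{i,u}(1-p_{i,u})\bar{\boldsymbol{X}}_{i,u}$, the resulting gradients $\nabla_{\boldsymbol{\gamma}_u} h_i$ are explicit polynomial expressions in $\bar{\boldsymbol{X}}_{i,u}$ and $Z_{i,u}-p_{i,u}$ divided by powers of $p_{i,u}(1-p_{i,u})$; condition (C7) provides uniform moment bounds for these in a neighborhood of the true parameter. Hence a uniform law of large numbers yields
\[
\hat{\rho}_{s,t} - \tilde{\rho}_{s,t} \;=\; \boldsymbol{c}_s^{\top}(\hat{\boldsymbol{\gamma}}_s - \boldsymbol{\gamma}_s) + \boldsymbol{c}_t^{\top}(\hat{\boldsymbol{\gamma}}_t - \boldsymbol{\gamma}_t) + o_p(n^{-1/2}),
\]
where $\boldsymbol{c}_u = E[\nabla_{\boldsymbol{\gamma}_u} h_i]$, with the $o_p(n^{-1/2})$ remainder controlled via (C3), (C7) and the $\sqrt{n}$-consistency of $\hat{\boldsymbol{\gamma}}_u$ from Theorem \ref{Th1}.

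Combining the two expansions,
\[
\sqrt{n}(\hat{\rho}_{s,t} - \rho_{s,t}) \;=\; \sqrt{n}(\tilde{\rho}_{s,t} - \rho_{s,t}) + \boldsymbol{c}_s^{\top}\sqrt{n}(\hat{\boldsymbol{\gamma}}_s - \boldsymbol{\gamma}_s) + \boldsymbol{c}_t^{\top}\sqrt{n}(\hat{\boldsymbol{\gamma}}_t - \boldsymbol{\gamma}_t) + o_p(1).
\]
The proof of Theorem \ref{Th1} supplies the i.i.d.\ influence-function representation $\sqrt{n}(\hat{\boldsymbol{\gamma}}_u - \boldsymbol{\gamma}_u) = a_u^{-1}\Sigma_u^{-1} n^{-1/2}\sum_{i \in A_u}(Z_{i,u}-p_{i,u})\bar{\boldsymbol{X}}_{i,u} + o_p(1)$. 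Re-indexing every contribution by the policyholder $i \in \{1,\ldots,n\}$ with indicator weights $\mathbf{1}\{i \in A_u\}$ (for the two score terms) and $\mathbf{1}\{i \in A_s \cap A_t\}$ (for $\tilde{\rho}_{s,t}$) turns the right-hand side into $n^{-1/2}\sum_{i=1}^n \xi_i + o_p(1)$, where the $\xi_i$ are mean-zero and independent across $i$ by (C1). A multivariate Lindeberg CLT, whose second-moment and uniform-integrability conditions are met by (C3) and (C7), then delivers asymptotic normality, and $\lambda_{s,t} = \lim_{n\to\infty} n^{-1}\sum_{i=1}^{n}\mathrm{Var}(\xi_i)$ is obtained by splitting policyholders into the disjoint subsets $A_s \cap A_t$, $A_s \setminus A_t$, $A_t \setminus A_s$ and using the limits $a_{s,t}, a_s, a_t$ from (C5)--(C6).

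The main obstacle is the bookkeeping across these three overlapping but distinct index sets: the residual-correlation sum is over $A_s \cap A_t$, whereas the score-based influence functions for $\hat{\boldsymbol{\gamma}}_s$ and $\hat{\boldsymbol{\gamma}}_t$ are over $A_s$ and $A_t$ respectively. The cross-covariances populating $\lambda_{s,t}$ depend nontrivially on whether a given policyholder is observed in only one of the years or in both, and an incorrect accounting leads to the wrong limiting variance. Condition (C6) (guaranteeing the overlap is of order $n$) together with (C7) (guaranteeing finite exponential-logit moments for the derivatives in a parameter neighborhood) are the assumptions that make the uniform Taylor step and the CLT step go through cleanly.
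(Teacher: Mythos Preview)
Your proposal is correct and follows essentially the same route as the paper: define the infeasible residual product $h_i$, Taylor-expand $\hat{\rho}_{s,t}$ about the true parameters so that the first-order terms are averages of $\nabla_{\boldsymbol{\gamma}_u}h_i$ (the paper's $\omega_{s,t},\omega_{t,s}$), plug in the influence-function representation of $\hat{\boldsymbol{\gamma}}_u$ from Theorem~\ref{Th1}, and apply a Lindeberg/Lyapunov CLT with (C3) and (C7) supplying the moment bounds. The only cosmetic difference is that the paper keeps the overlapping indicators $I(i\in A_s\cap A_t),\,I(i\in A_s),\,I(i\in A_t)$ when computing $\lambda_{s,t}$ rather than splitting into the disjoint pieces $A_s\cap A_t,\,A_s\setminus A_t,\,A_t\setminus A_s$ as you suggest; either bookkeeping yields the same limit.
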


Next, we can follow the procedures in Section \ref{sec:method:test1} by applying a random weighted bootstrap method to estimate the asymptotic variance $\lambda_{s,t}$ as follows.
	\begin{itemize}
		\item Step \romannumeral2 1) Use the same random sample $\{\delta_i^b\}_{i=1}^n$ generated by Step a1) of the previous bootstrapping scheme (for the asymptotic covariance of $\hat{\bm{\gamma}}$) as the bootstrap weights of the $n$ observations.
		\item Step \romannumeral2 2) Compute the bootstrapped empirical residual correlation as follows:
		\begin{equation} \label{eq:corr_boot}
			\hat{\rho}^b_{s,t}=\frac{1}{n_{s,t}}\sum_{i\in  A_{s}\cap A_t}\delta_i^b\left(\frac{Z_{i,s}-\hat{p}^b_{i,s}}{\sqrt{\hat{p}^b_{i,s}(1-\hat{p}^b_{i,s})}}\right)\left(\frac{Z_{i,t}-\hat{p}^b_{i,t}}{\sqrt{\hat{p}^b_{i,t}(1-\hat{p}^b_{i,t})}}\right),
		\end{equation}
where $\hat{p}^b_{i,t}=\exp(\hat{\alpha}^b_t+\hat{\boldsymbol{\beta}}_t^{b\top}\boldsymbol{X}_{i,t})/\{ 1+\exp(\hat{\alpha}^b_t+\hat{\boldsymbol{\beta}}_t^{b\top}\boldsymbol{X}_{i,t})\} $ with $\hat{\bm{\gamma}}^b_t=(\hat{\alpha}^b_t,\hat{\bm{\beta}}^b_t)$ given by (\ref{eq:gamma_boot}).
		\item Step \romannumeral2 3) Repeat the above two steps $B$ times to get $\{\hat{\rho}^b_{s,t}\}_{b=1}^B$.
	\end{itemize}

\begin{theorem}\label{Th5}
Under the conditions of Theorem \ref{Th4}, 
$\sqrt n (\hat{\rho}_{s,t}^b-\hat{\rho}_{s,t})$ and $\sqrt n(\hat{\rho}_{s,t}-\rho_{s,t})$ have the same normal limit as $n\to\infty$.
\end{theorem}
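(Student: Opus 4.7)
My plan is to mirror the architecture of the proof of Theorem 2 while adding a delta-method layer, since both $\hat{\rho}_{s,t}$ and $\hat{\rho}_{s,t}^b$ are smooth functionals of empirical averages and of plug-in logistic estimators. I would first introduce the standardized residual product at generic parameters,
$$g_{i,s,t}(\boldsymbol{\eta}_s,\boldsymbol{\eta}_t)=\frac{(Z_{i,s}-p_{i,s}(\boldsymbol{\eta}_s))(Z_{i,t}-p_{i,t}(\boldsymbol{\eta}_t))}{\sqrt{p_{i,s}(\boldsymbol{\eta}_s)(1-p_{i,s}(\boldsymbol{\eta}_s))\,p_{i,t}(\boldsymbol{\eta}_t)(1-p_{i,t}(\boldsymbol{\eta}_t))}},$$
so that $\hat{\rho}_{s,t}=n_{s,t}^{-1}\sum_{i\in A_s\cap A_t}g_{i,s,t}(\hat{\boldsymbol{\gamma}}_s,\hat{\boldsymbol{\gamma}}_t)$ and $\hat{\rho}_{s,t}^b=n_{s,t}^{-1}\sum_{i\in A_s\cap A_t}\delta_i^b\,g_{i,s,t}(\hat{\boldsymbol{\gamma}}_s^b,\hat{\boldsymbol{\gamma}}_t^b)$. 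Under (C7), $g_{i,s,t}$ is smooth in $(\boldsymbol{\eta}_s,\boldsymbol{\eta}_t)$ with an integrable envelope, so a Taylor expansion at the true $(\boldsymbol{\gamma}_s,\boldsymbol{\gamma}_t)$, together with Theorem~1 and Theorem~2 (which deliver $\sqrt{n}$-rates for the parameter estimates and their bootstrap versions), is justified.

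The two expansions give
$$\sqrt{n}(\hat{\rho}_{s,t}-\rho_{s,t})=\sqrt{n}(U_{s,t}-\rho_{s,t})+\boldsymbol{D}_s^{\top}\sqrt{n}(\hat{\boldsymbol{\gamma}}_s-\boldsymbol{\gamma}_s)+\boldsymbol{D}_t^{\top}\sqrt{n}(\hat{\boldsymbol{\gamma}}_t-\boldsymbol{\gamma}_t)+o_p(1),$$
$$\sqrt{n}(\hat{\rho}_{s,t}^b-\rho_{s,t})=\sqrt{n}(U_{s,t}^b-\rho_{s,t})+\boldsymbol{D}_s^{\top}\sqrt{n}(\hat{\boldsymbol{\gamma}}_s^b-\boldsymbol{\gamma}_s)+\boldsymbol{D}_t^{\top}\sqrt{n}(\hat{\boldsymbol{\gamma}}_t^b-\boldsymbol{\gamma}_t)+o_p(1),$$
where $U_{s,t}$ and $U_{s,t}^b$ are the unweighted and $\delta^b$-weighted averages of $g_{i,s,t}(\boldsymbol{\gamma}_s,\boldsymbol{\gamma}_t)$, and $\boldsymbol{D}_s,\boldsymbol{D}_t$ are the population gradients of $E\{g_{i,s,t}\}$ at the truth. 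The empirical gradients arising in the bootstrap expansion are $\delta^b$-weighted; because $E\delta_i^b=1$ and (C7) underwrites a uniform law of large numbers, they converge in probability to the same $\boldsymbol{D}_s,\boldsymbol{D}_t$, so replacing them by their limits is $o_p(1)$. Subtracting the two displays yields
$$\sqrt{n}(\hat{\rho}_{s,t}^b-\hat{\rho}_{s,t})=\frac{\sqrt{n}}{n_{s,t}}\sum_{i\in A_s\cap A_t}(\delta_i^b-1)\bigl[g_{i,s,t}(\boldsymbol{\gamma}_s,\boldsymbol{\gamma}_t)-\rho_{s,t}\bigr]+\boldsymbol{D}_s^{\top}\sqrt{n}(\hat{\boldsymbol{\gamma}}_s^b-\hat{\boldsymbol{\gamma}}_s)+\boldsymbol{D}_t^{\top}\sqrt{n}(\hat{\boldsymbol{\gamma}}_t^b-\hat{\boldsymbol{\gamma}}_t)+o_p(1).$$
From the proof of Theorem~2, each $\sqrt{n}(\hat{\boldsymbol{\gamma}}_t^b-\hat{\boldsymbol{\gamma}}_t)$ admits a linear representation as $\sqrt{n}\,\Sigma_t^{-1}n_t^{-1}\sum_{i\in A_t}(\delta_i^b-1)(Z_{i,t}-p_{i,t})\bar{\boldsymbol{X}}_{i,t}+o_p(1)$, which mirrors the linear expansion of $\sqrt{n}(\hat{\boldsymbol{\gamma}}_t-\boldsymbol{\gamma}_t)$ used in Theorem~1 with $(\delta_i^b-1)$ playing the role of unit weights on the score.

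Plugging in these linearizations, the entire right-hand side collapses to a single linear functional $n^{-1/2}\sum_{i=1}^{n}(\delta_i^b-1)h_i$ of the i.i.d.\ multipliers against a data-dependent vector $h_i$ built from the residual products and scores at the true parameters (with set-membership indicators for $A_s,A_t,A_s\cap A_t$). The conditional multiplier CLT, combined with the consistency of the empirical second-moment matrix of the $h_i$'s, then yields, conditionally on the data, a normal limit whose covariance matches that of the unconditional limit in Theorem~4, namely $\lambda_{s,t}$. The main obstacle will be pairing the joint asymptotic covariance between the multiplier-weighted residual-product term and the multiplier-weighted parameter increments, since both depend on the same $\delta_i^b$'s and on overlapping observations across the sets $A_s$, $A_t$, and $A_s\cap A_t$; a secondary difficulty is uniformly bounding the Taylor remainders over the $O_p(n^{-1/2})$ neighborhood, which I would handle via the exponential moment bound in (C7). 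Collapsing to the common linear functional $n^{-1/2}\sum_i(\delta_i^b-1)h_i$ resolves the joint-covariance bookkeeping automatically, because $E(h_i h_i^{\top})$ is exactly the quantity that generates $\lambda_{s,t}$ in the proof of Theorem~4.
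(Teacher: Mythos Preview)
Your proposal is correct and follows essentially the same approach the paper has in mind: the paper's entire proof of Theorem~\ref{Th5} is the single sentence ``It is similar to the proof of Theorem~\ref{Th2},'' and what you have written is precisely the natural unpacking of that sentence---Taylor-expand $\hat{\rho}_{s,t}$ and $\hat{\rho}_{s,t}^b$ as in the proof of Theorem~\ref{Th4}, replace the parameter increments by their $(\delta_i^b-1)$-weighted score representations from Theorem~\ref{Th2}, and collapse everything into a single multiplier sum whose limiting variance reproduces $\lambda_{s,t}$. One small algebraic slip: when you subtract the two expansions, the $\rho_{s,t}$ terms cancel, so the leading multiplier sum should contain $g_{i,s,t}(\boldsymbol{\gamma}_s,\boldsymbol{\gamma}_t)$ rather than $g_{i,s,t}(\boldsymbol{\gamma}_s,\boldsymbol{\gamma}_t)-\rho_{s,t}$; the extra constant contributes an $O_p(1)$ term, not $o_p(1)$, so be careful with that bookkeeping when matching the variance to $\lambda_{s,t}$.
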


From the theorem above, we estimate the asymptotic variance of $\sqrt n(\hat{\rho}_{s,t}-\rho_{s,t})$ 
by
\begin{equation} \label{eq:lambda_hat}
\hat{\lambda}_{s,t}=\frac nB\sum_{b=1}^B(\hat{\rho}^b_{s,t}-\hat{\rho}_{s,t})^2,
\end{equation}
and define the \emph{pairwise correlation test statistic} for null hypothesis of (\ref{null_indep_pair}) as
\begin{equation} \label{eq:test_stat_corr_pair}
\Gamma_{s,t,n}=n\frac{\hat{\rho}_{s,t}^2}{\hat{\lambda}_{s,t}}.
\end{equation}
	
\begin{theorem}\label{Th6}
Under the conditions of Theorem \ref{Th4} and the null hypothesis (\ref{null_indep_pair}), 
\[\Gamma_{s,t,n}\overset{d}{\to}\chi^2(1)~\text{as}~ n\to\infty ~\text{and}~ B \to\infty.\]
\end{theorem}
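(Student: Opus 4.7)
The plan is to derive Theorem \ref{Th6} as a direct corollary of Theorems \ref{Th4} and \ref{Th5} via Slutsky's theorem. First, I would observe that under (\ref{null_indep_pair}) the conditional independence of $Z_{i,s}$ and $Z_{i,t}$ given $\bm{X}_{i,s}, \bm{X}_{i,t}$ forces $\text{Corr}(Z_{i,s}, Z_{i,t} \mid \bm{X}_{i,s}, \bm{X}_{i,t}) = 0$ almost surely, and hence $\rho_{s,t} = E\{\text{Corr}(Z_{i,s}, Z_{i,t} \mid \bm{X}_{i,s}, \bm{X}_{i,t})\} = 0$. Theorem \ref{Th4} then collapses to $\sqrt{n}\,\hat{\rho}_{s,t} \overset{d}{\to} N(0, \lambda_{s,t})$ as $n \to \infty$. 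Positivity of $\lambda_{s,t}$ under the null is easy to verify from the form that emerges in the proof of Theorem \ref{Th4}: the leading term behaves like $n_{s,t}^{-1}\sum_i r_{i,s} r_{i,t}$ with standardized Bernoulli residuals $r_{i,s}, r_{i,t}$ that are mean-zero, variance-one, and conditionally uncorrelated across $s,t$, so the asymptotic variance is bounded below by a positive constant of order $a_{s,t}^{-1}$.

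Second, I would establish that the bootstrap variance estimator is consistent: $\hat{\lambda}_{s,t} \overset{P}{\to} \lambda_{s,t}$ as $n, B \to \infty$. Theorem \ref{Th5} delivers conditional convergence in distribution of $\sqrt{n}(\hat{\rho}^b_{s,t} - \hat{\rho}_{s,t})$ to $N(0, \lambda_{s,t})$, but to turn this into convergence of the empirical bootstrap second moment I would establish uniform integrability of $\{n(\hat{\rho}^b_{s,t} - \hat{\rho}_{s,t})^2\}$. The exponential moment bound (C7) together with (C3) controls the moments of the standardized residuals $(Z_{i,t} - \hat{p}^b_{i,t})/\sqrt{\hat{p}^b_{i,t}(1 - \hat{p}^b_{i,t})}$ uniformly in the random weights $\delta_i^b$, since the bootstrap estimators $\hat{\bm{\gamma}}^b_t$ sit in a shrinking neighborhood of the true $\bm{\gamma}_t$. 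Conditional on the data, the independent replicates $\{\delta_i^b\}$ across $b=1,\dots,B$ let me apply the (conditional) law of large numbers, so that for $B$ large $\hat{\lambda}_{s,t}$ approximates $n$ times the conditional variance of $\hat{\rho}^b_{s,t}$ given the data; Theorem \ref{Th5} combined with the uniform integrability above then forces this conditional variance, rescaled by $n$, to converge to $\lambda_{s,t}$ as $n \to \infty$.

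Third, armed with $\sqrt{n}\,\hat{\rho}_{s,t} \overset{d}{\to} N(0, \lambda_{s,t})$ and $\hat{\lambda}_{s,t} \overset{P}{\to} \lambda_{s,t} > 0$, Slutsky's theorem yields $\sqrt{n}\,\hat{\rho}_{s,t}/\sqrt{\hat{\lambda}_{s,t}} \overset{d}{\to} N(0,1)$, and the continuous mapping theorem delivers $\Gamma_{s,t,n} = n\hat{\rho}_{s,t}^2/\hat{\lambda}_{s,t} \overset{d}{\to} \chi^2(1)$. The main obstacle will be the second step: upgrading the weak convergence of $\sqrt{n}(\hat{\rho}^b_{s,t} - \hat{\rho}_{s,t})$ stated in Theorem \ref{Th5} into the $L^2$-type convergence that makes the empirical bootstrap variance a consistent estimator of $\lambda_{s,t}$. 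The remaining ingredients---the vanishing of $\rho_{s,t}$ under (\ref{null_indep_pair}) and the final Slutsky/continuous mapping step---are essentially routine given the apparatus already in place.
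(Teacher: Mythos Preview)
Your proposal is correct and matches the paper's approach exactly: the paper's proof of Theorem \ref{Th6} consists of the single sentence ``It immediately follows from Theorems \ref{Th4} and \ref{Th5},'' and your three-step outline (vanishing of $\rho_{s,t}$ under the null, consistency of $\hat{\lambda}_{s,t}$ via the bootstrap, then Slutsky and continuous mapping) is precisely the intended unpacking of that sentence. Your care over the uniform-integrability gap in step two is more than the paper provides---the paper, in its analogous (commented-out) argument for Theorem \ref{Th3}, simply writes $\sqrt{n}(\hat{\boldsymbol{\gamma}}^b-\hat{\boldsymbol{\gamma}})=\boldsymbol{W}^b+o_p(1)$ and averages, treating the passage from weak convergence to second-moment convergence as routine.
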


Hence, we reject the null hypothesis of (\ref{null_indep_pair}) at level $a$ whenever $\Gamma_{s,t,n}>\chi^2_{1-a,1}$.

\begin{remark}\label{rem1}
To assess the joint dependence across all $T$ years, i.e., testing the null hypothesis of (\ref{null_indep}), we also create two $(T(T-1)/2)\times 1$ vectors of empirical and bootstrapped residual correlations $\hat{\bm{\rho}}:=(\{\hat{\rho}_{s,t}\}_{1\leq s<t\leq T})$ and $\hat{\bm{\rho}}^b:=(\{\hat{\rho}_{s,t}^b\}_{1\leq s<t\leq T})$ across all pairs of years. 
Then, we construct an \emph{aggregated correlation test statistic}, given by
\begin{equation} \label{eq:test_stat_corr}
\Gamma_n=n\hat{\bm{\rho}}^{\top}\hat\Lambda^{-1}\hat{\bm{\rho}}\quad\text{with}\quad
\hat{\Lambda}=\frac nB\sum_{b=1}^B(\hat{\bm{\rho}}^b-\hat{\bm{\rho}})(\hat{\bm{\rho}}^b-\hat{\bm{\rho}})^{\top}.
\end{equation}
Following the proof techniques from Theorems \ref{Th4} to \ref{Th6},  we can show that $\Gamma_n\overset{d}{\to}\chi^2(T(T-1)/2)$ as $n\to\infty$ and $B \to\infty$ subject to the following additional regularity condition:
\begin{itemize}
\item {\it (C8)} $n_{t_1,t_2,t_3,t_4}/n\to a_{t_1,t_2,t_3,t_4}\in[0,1]$ for any $t_1,t_2,t_3,t_4=1,\ldots,T$, where $n_{t_1,t_2,t_3,t_4}$ is the number of elements in $A_{t_1}\cap A_{t_2}\cap A_{t_3}\cap A_{t_4}$.
\end{itemize}
Therefore, we reject the null hypothesis of (\ref{null_indep}) at level $a$ when $\Gamma_n>\chi^2_{1-a,T(T-1)/2}$. 
We omit the proof. \end{remark}

\begin{remark}
One may alternatively perform a standard t-test on the residual correlation by applying the {\normalfont \texttt{cor.test}} function in {\normalfont \texttt{R}}. However, our proposed correlation test is different from this standard method in several perspectives. Firstly, our test is applied to imbalanced longitudinal claim data. Secondly, our test can be extended to assess the joint dependence for all $T>2$ years (see Remark \ref{rem1}), while the standard correlation test only provides a pairwise assessment for $T=2$ years. Thirdly, the standard correlation test tends to underestimate the standard error of the residual correlation because it fails to cater for the estimation uncertainty of $\hat{p}_{i,t}$, see Section \ref{sec:simu_corr} for more details.
\end{remark}

\section{Data Analysis}\label{sec:data}
This section applies the developed tests to two public datasets.

\subsection{Wisconsin LGPIF dataset} \label{sec:data_lgpif}
In this subsection, we analyze the Local Government Property Insurance Fund (LGPIF) data from the state of Wisconsin, which is publicly available in \url{https://sites.google.com/a/wisc.edu/jed-frees/home}. The dataset records the claim information of $n=1,234$ local government policyholders (entities) from 2006 to 2010. For each entity, the claim information is recorded on a year-by-year basis. Since not all entities are insured through a full five years from 2006 to 2010, the total number of entity-years is $\tilde{n}=5,677<1,234\times 5$. For each entity year, the claim frequencies and average claim severities are recorded across six types of coverages (perils). 
Explanatory variables accompany each observation; see Table \ref{tab:prelim_x} for the variable descriptions.

\begin{table}[!h]
\centering
\caption{\label{tab:prelim_x}[Wisconsin LGPIF dataset] Descriptive summary for the explanatory variables.}
\resizebox{\textwidth}{!}{%
\begin{tabular}{llll}
\toprule
\multicolumn{1}{c}{Index} & \multicolumn{1}{c}{Variable name} & \multicolumn{1}{c}{Type} & \multicolumn{1}{c}{Description} \\
\midrule
1 & \texttt{TypeCity} & Categorical & Indicator for city entity.\\
2 & \texttt{TypeCounty} & Categorical & Indicator for county entity.\\
3 & \texttt{TypeSchool} & Categorical & Indicator for school entity.\\
4 & \texttt{TypeTown} & Categorical & Indicator for town entity.\\
5 & \texttt{TypeVillage} & Categorical & Indicator for village entity.\\
-- & \texttt{TypeMisc} & Categorical & Indicator for miscellaneous entity (reference group).\\
6 & \texttt{IsRC} & Binary & Indicator for replacement cost.\\
7 & \texttt{log(1+CoverageBC)} & Continuous & Coverage amount (transformed). \\
8 & \texttt{lnDeductBC} & Binary & Deductible amount (transformed).\\
\hhline{====}
\end{tabular}
}
\end{table}
	
For this research problem, we focus only on the building and contents of BC peril because it is the only peril that contains sufficient observations with a non-zero number of claims. Table \ref{tab:data_summary} presents the summary statistics of the empirical observations under the BC peril. We first observe that the number of policyholders $n_t$ slightly decreases as $t$ increases, showing that a few entities drop out of the pool over time. Also, the proportion of observations with non-zero claims, i.e., $\sum_{i=1}^{n_t}Z_{i,t}/n_t$, fluctuates quite substantially over the years. In particular, higher proportions of policyholders filed claims in 2007 and 2010. However, solely based on this information, one cannot conclude that the distribution of $Z_{i,t}$ given $\bm{X}_{i,t}$ exhibits serial dynamic over time (i.e., violation of (A1)) because the shift of some variables over time, such as \texttt{log(1+CoverageBC)}, may explain well such a dynamic. Therefore, we need to understand if the covariates fully explain such a fluctuation or if the conditional claim probability structurally changes over time, which may imply a potentially non-constant time-series structure underlying the claim arrival process that can hardly be predicted with only five years of claim experience.
	
	\begin{table}[!h]
		\centering
		\caption{[Wisconsin LGPIF dataset] Summary statistics of the observations (BC peril) across different years.}
			\begin{tabular}{lrrrrr}
				\toprule
				Year & \multicolumn{1}{c}{2006} & \multicolumn{1}{c}{2007} & \multicolumn{1}{c}{2008} & \multicolumn{1}{c}{2009} & \multicolumn{1}{c}{2010} \\ \hline
				Number of observations & 1162 & 1147 & 1134 & 1117 & 1117 \\
				Proportion of non-zero claims & 0.2659 & 0.3112 & 0.2734 & 0.2695 & 0.3644 \\
				\hhline{======}
			\end{tabular}
		\label{tab:data_summary}
	\end{table}
	
For each year $t$, a set of observations $\{(Z_{i,t},\bm{X}_{i,t})\}_{i=1,\ldots,n_t}$ is fitted to a logistic regression model. The estimated parameters and their standard errors across different years are presented in Table \ref{tab:reg_coef}. The variables \texttt{TypeCity}, \texttt{TypeCounty}, \texttt{log(1+CoverageBC)}, and \texttt{lnDeductBC} are significantly non-zero across (almost) all years. 
For these variables, the signs of the regression coefficients do not change over time, providing evidence that the fitted models do not drastically change over time. On the other hand, while the intercept parameter is insignificant, it varies substantially across years: the intercept is positive for 2007 and 2010 and negative for other years. This result echoes Table \ref{tab:data_summary} that policyholders are more likely to file at least one claim in 2007 and 2010 than in other years. Overall, it is difficult to make conclusions on the overall serial dynamic of the conditional distributions over time solely based on the preliminary analysis (e.g., Tables \ref{tab:data_summary} and \ref{tab:reg_coef}), and hence it is essential to employ the test statistics developed in Section \ref{sec:method} for quantitative assessments.

	\begin{table}[!h]
		\centering
			\caption{[Wisconsin LGPIF dataset] Summary of the logistic regression coefficients and their standard errors across different years. The bolded values represent significance at 5\% level.}
		\resizebox{\textwidth}{!}{%
			\begin{tabular}{lrrrrrrrrrr}
				\toprule
				Year & \multicolumn{2}{c}{2006} & \multicolumn{2}{c}{2007} & \multicolumn{2}{c}{2008} & \multicolumn{2}{c}{2009} & \multicolumn{2}{c}{2010} \\
				Variable & \multicolumn{1}{c}{Estimate} & \multicolumn{1}{c}{SE} & \multicolumn{1}{c}{Estimate} & \multicolumn{1}{c}{SE} & \multicolumn{1}{c}{Estimate} & \multicolumn{1}{c}{SE} & \multicolumn{1}{c}{Estimate} & \multicolumn{1}{c}{SE} & \multicolumn{1}{c}{Estimate} & \multicolumn{1}{c}{SE} \\ \hline
				\texttt{Intercept} & -0.4590 & (0.5837) & 0.6224 & (0.5008) & -0.1773 & (0.5647) & -0.3607 & (0.5287) & 0.2350 & (0.5137) \\
				\texttt{TypeCity} & \textbf{0.8641} & (0.4107) & \textbf{0.8279} & (0.3415) & \textbf{1.3752} & (0.3999) & 0.4529 & (0.3567) & \textbf{0.9746} & (0.3406) \\
				\texttt{TypeCounty} & \textbf{1.1600} & (0.4883) & \textbf{1.2068} & (0.4385) & \textbf{2.2752} & (0.5210) & \textbf{1.5322} & (0.4465) & \textbf{1.5549} & (0.4540) \\
				\texttt{TypeSchool} & 0.3758 & (0.3998) & 0.2478 & (0.3292) & -0.1513 & (0.3988) & -0.4703 & (0.3533) & -0.0534 & (0.3283) \\
				\texttt{TypeTown} & 0.9200 & (0.4728) & -0.2726 & (0.4012) & 0.4330 & (0.4907) & -0.2017 & (0.4417) & 0.3459 & (0.3765) \\
				\texttt{TypeVillage} & \textbf{1.1419} & (0.3961) & \textbf{0.6503} & (0.3078) & \textbf{1.1812} & (0.3799) & \textbf{0.5539} & (0.3283) & \textbf{0.8906} & (0.3071) \\
				\texttt{IsRC} & -0.0363 & (0.1824) & -0.0329 & (0.1739) & 0.0331 & (0.1890) & 0.2112 & (0.1821) & 0.2796 & (0.1728) \\
				\texttt{log(1+CoverageBC)} & \textbf{1.0757} & (0.1037) & \textbf{0.7711} & (0.0908) & \textbf{1.0185} & (0.1051) & \textbf{0.8766} & (0.0977) & \textbf{0.9302} & (0.0947) \\
				\texttt{lnDeductBC} & \textbf{-0.5876} & (0.0789) & \textbf{-0.5441} & (0.0728) & \textbf{-0.6064} & (0.0800) & \textbf{-0.4638} & (0.0765) & \textbf{-0.5404} & (0.0755) \\ 
				\hhline{===========}
			\end{tabular}
		}
		\label{tab:reg_coef}
	\end{table}
	
To assess the serial dynamics of the conditional distribution of $Z_{i,t}$ given $\bm{X}_{i,t}$ over time, we test the null hypothesis of (\ref{null1}) by first computing the pairwise serial dynamic test statistic $\Delta_{s,t,n}$ in (\ref{eq:stat:serial_pair}) for each pair of $(s,t)$. The top right triangle of Table \ref{tab:wald_matrix} presents $\Delta_{s,t,n}$ across different pairs of years, and the bottom left triangle provides the resulting p-values. We find that (\ref{null1}) is rejected under many pairs of $(s,t)$, meaning that the systematic change of conditional distributions over time is prevalent. Looking closely at the p-values, we find that the estimated parameters between years 2008 and 2009 and between years 2007 and 2010 do not differ significantly, but they change significantly from the year 2007 (or 2010) to the year 2008 (or 2009). The aggregate serial dynamic test statistic in (\ref{eq:stat:serial_agg}) becomes $\Delta_{n}=83.1105$ and the corresponding p-value is computed as $p=1.3680\times 10^{-5}\ll 0.05$. As expected, (\ref{null1}) is strongly rejected, suggesting that there are serial dynamics on the conditional claim distributions. 
	
A probable intuition or explanation of the above results is that external environmental states (e.g., social-economic conditions, climate change, and government policy), which may switch over time, influence the claiming behavior of all policyholders simultaneously, causing a heterogeneity of claim distributions over time. For example, the environmental state in the years 2007 and 2010 (2008 and 2009) is classified as ``high-risk" (``low-risk"), so policyholders are generally riskier (less risky).

	\begin{table}[!h]
		\centering
		\small
		\caption{[Wisconsin LGPIF dataset] Pairwise serial dynamic test statistics (top right triangle) and the corresponding p-values (bottom left triangle). The bolded values represent significance at 5\% level.}\label{tab:wald_matrix}
		\begin{subtable}[t]{\textwidth}
			\centering
				\begin{tabular}{l|rrrrr}
				\multicolumn{6}{c}{Serial dynamic test}\\
					\toprule
					Year & \multicolumn{1}{c}{2006} & \multicolumn{1}{c}{2007} & \multicolumn{1}{c}{2008} & \multicolumn{1}{c}{2009} & \multicolumn{1}{c}{2010} \\ \hline
					2006 &  & 13.89 & 21.53 & 18.80 & 29.77 \\
					2007 & 0.1262 &  & 31.75 & 25.41 & 14.61 \\
					2008 & \textbf{0.0105} & \textbf{0.0002} &  & 8.89 & 35.12 \\
					2009 & \textbf{0.0269} & \textbf{0.0026} & 0.4477 &  & 36.16 \\
					2010 & \textbf{0.0005} & 0.1021 & \textbf{0.0001} & \textbf{0.0000} &  \\ 
					\hhline{======}
				\end{tabular}
		\end{subtable}
	\end{table}

To analyze the dependence of $Z_{i,t}$ across different $t$ conditioned on the observed information $(\bm{X}_{i,1},\ldots,\bm{X}_{i,T})$, we first compute the empirical residual correlations $\hat{\rho}_{s,t}$ in (\ref{eq:corr_pair}) for each pair of $(s,t)$ and present them in the left panel of Table \ref{tab:corr_matrix} (bottom left triangle). For comparison, we also provide the sample correlations of $\{(Z_{i,t}, Z_{i,s})\}_{i\in A_{t,s}}$ for each pair of $(s,t)$, which do not control for the covariates, in the top right triangle of the same table. The empirical residual correlations are substantially smaller than the sample correlations for all $(s,t)$, showing that the covariates partially explain the serial dependence of claims. In particular, while the sample correlation remains quite large even with a large time lag (i.e., $l=|s-t|=4$), the empirical residual correlation generally diminishes as $l$ increases. We then test the independence hypotheses of (\ref{null_indep}) and (\ref{null_indep_pair}) by computing the aggregate and pairwise correlation test statistics, $\Gamma_n$ and $\Gamma_{s,t,n}$, in (\ref{eq:test_stat_corr}) and (\ref{eq:test_stat_corr_pair}). The right panel of Table \ref{tab:corr_matrix} presents $\Gamma_{s,t,n}$ accompanied by the corresponding p-values for each pair of $(s,t)$. We see that the claims between year $s$ and $t$ are significantly correlated only when the time lag is no more than two years. Hence, only the most recent claim experiences have significant predictive powers for future claims. As described in Section \ref{sec:problem}, this result favors using dependence models that reflect a decaying or diminishing correlation structure over an increasing time lag while disfavoring models with long-term dependence structures such as symmetric copula and static random effects models. 
Such an empirical finding also coincides with, e.g., \cite{ahn2021ordering}, which shows that the so-called dynamic random effects model following an AR(1) dependence structure outperforms the static random effects model.
Overall, the aggregated correlation test statistic is $\Gamma_n=45.3862$ with a p-value of $p=1.8522\times 10^{-6}\ll 0.05$, strongly rejecting (\ref{null_indep}) and recommending the use of longitudinal models that capture serial dependence.
	\begin{table}[!h]
		\caption{[Wisconsin LGPIF dataset] Left panel: Sample correlations without controlling for covariates (top right triangle) versus empirical residual correlations (bottom left triangle). Right panel: Pairwise independence test statistics (top right triangle) and the corresponding p-values (bottom left triangle).}
		\begin{subtable}[t]{0.49\textwidth}
			\centering
			\resizebox{\textwidth}{!}{%
				\begin{tabular}{c|rrrrr}
				\multicolumn{6}{c}{Correlation values}\\
					\toprule
					Year & \multicolumn{1}{c}{2006} & \multicolumn{1}{c}{2007} & \multicolumn{1}{c}{2008} & \multicolumn{1}{c}{2009} & \multicolumn{1}{c}{2010} \\ \hline
					2006 &  & 0.3023 & 0.2853 & 0.3037 & 0.2853 \\
					2007 & 0.0911 &  & 0.3918 & 0.3021 & 0.2818 \\
					2008 & 0.0127 & 0.1559 &  & 0.3271 & 0.3700 \\
					2009 & 0.0485 & 0.1017 & 0.0793 &  & 0.3142 \\
					2010 & 0.0571 & 0.0600 & 0.1055 & 0.0790 & \\
					\hhline{======}
				\end{tabular}
			}
		\end{subtable}
		\begin{subtable}[t]{0.51\textwidth}
			\centering
			\resizebox{\textwidth}{!}{%
				\begin{tabular}{c|rrrrr}
				\multicolumn{6}{c}{Correlation test}\\
					\toprule
					Year & \multicolumn{1}{c}{2006} & \multicolumn{1}{c}{2007} & \multicolumn{1}{c}{2008} & \multicolumn{1}{c}{2009} & \multicolumn{1}{c}{2010} \\ \hline
					2006 &  & 5.69 & 0.22 & 2.72 & 3.01 \\
					2007 & \textbf{0.0170} &  & 18.48 & 7.09 & 3.36 \\
					2008 & 0.6395 & \textbf{0.0000} &  & 4.05 & 9.27 \\
					2009 & 0.0993 & \textbf{0.0077} & \textbf{0.0441} &  & 5.60 \\
					2010 & 0.0826 & 0.0666 & \textbf{0.0023} & \textbf{0.0179} &  \\
					\hhline{======}
				\end{tabular}
			}
		\end{subtable}
		\label{tab:corr_matrix}
	\end{table}

\subsection{French private motor dataset}
We analyze the French private motor dataset, which is publicly available in the \texttt{R} package called \texttt{CASdatasets} by retrieving \texttt{data(fremotor2freq9907b)}. The dataset records the annual claim frequencies among $n=72,479$ policyholders from 1999 to 2007. For illustrative purposes, we only analyze the years 2001 to 2006, the period when the number of observations $n_t$ is sufficiently large and is rapidly changing over time. Explanatory variables accompanying each policyholder are described in Table \ref{tab:prelim_x_french}. Note that the three categorical variables, vehicle usage (\texttt{Usage}), vehicle type (\texttt{VehType}), and vehicle power (\texttt{VehPower}), originally have 18, 15, and 8 levels, respectively, and the distribution of these variables are severely imbalanced. 
As a result, we aggregate these levels (see Table \ref{tab:prelim_x_french}) to ensure a sufficient number of observations for each level of each variable. 
Investigation of the optimal way to aggregate the variable levels is out of the scope of this paper. 

\begin{table}[!h]
\caption{\label{tab:prelim_x_french}[French motor dataset] Descriptive summary for the explanatory variables.}
\centering
\resizebox{\textwidth}{!}{%
\begin{tabular}{lllll}
\toprule
\multicolumn{1}{c}{Index} & \multicolumn{1}{c}{Variable name} & \multicolumn{1}{c}{Type} & \multicolumn{1}{c}{Levels} & \multicolumn{1}{c}{Description} \\
\midrule
1 & \texttt{Expo} & Continuous & -- & Policyholder exposure over a particular year (between 0 and 1). \\
2 & \texttt{NEW} & Binary & -- & Indicator of a new insurance contract. \\
3--9 & \texttt{Usage} & Categorical & A--H & Vehicle Usage. Levels A to G map to Usage 10, 11, 14, 15, 16, 18, and   5 \\
 &  &  &  & from the original data. Level H is a reference category for other usages. \\
10--16 & \texttt{VehType} & Categorical & A--H & Vehicle Type. Levels A to G map to Type 10, 11, 3, 6, 7, 8, and 9 from the \\
 &  &  &  & original data. Level H is a reference category for other types. \\
17--20 & \texttt{VehPower} & Categorical & A--E & Vehicle Power. Levels A to D map to Power levels 1 to 4 from the original \\
 &  &  &  & data. Level E is a reference category for more powerful vehicles.\\
\hhline{=====}
\end{tabular}
}
\end{table}

	\begin{table}[!h]
		\small
	\caption{[French motor dataset] Summary statistics of the observations across different years.}
		\centering
			\begin{tabular}{lrrrrrr}
				\toprule
				Year & \multicolumn{1}{c}{2001} & \multicolumn{1}{c}{2002} & \multicolumn{1}{c}{2003} & \multicolumn{1}{c}{2004} & \multicolumn{1}{c}{2015} & \multicolumn{1}{c}{2016}\\ \hline
				Number of observations & 22491 & 31035 & 40742 & 50450 & 60957 & 72749 \\
				Proportion of non-zero claims & 0.1329 & 0.1278 & 0.1291 & 0.1326 & 0.1282 & 0.1245 \\
				\hhline{=======}
			\end{tabular}
		\label{tab:data_summary_french}
	\end{table}
	
Following the procedures of analyzing the LGPIF dataset in Section \ref{sec:data_lgpif}, we first present the summary statistics for the French motor dataset in Table \ref{tab:data_summary_french}. As $n_t$ increases over time, new policyholders join the pool yearly. About 13\% of the policyholders file at least one claim each year, and this number does not fluctuate significantly over time. The estimated logistic regression parameters with the standard errors for each year are then reported in Table \ref{tab:reg_coef_french}. The regression coefficients do not vary substantially over time. 
	
To test the null hypothesis of (\ref{null1}) for no serial dynamics, we present a matrix of the pairwise serial dynamic test statistic $\Delta_{s,t,n}$ (Equation (\ref{eq:stat:serial_pair})) across all pairs of years with the corresponding p-values in the left panel of Table \ref{tab:wald_matrix_french} (full model). Since all the p-values are above the significance level of 0.05, there is no apparent structural difference in conditional claim distributions between years $s$ and $t$. Moreover, the aggregate serial dynamic test statistic (Equation (\ref{eq:stat:serial_agg})) is $\Delta_{n}=114.8076$ with a p-value $p=0.2412>0.05$, showing that the null hypothesis of (\ref{null1}) is not rejected.

	\begin{table}[!h]
		\centering
		\caption{[French motor dataset] Summary of the logistic regression coefficients and their standard errors. The bolded values represent significance at 5\% level.}
		\resizebox{\textwidth}{!}{%
			\begin{tabular}{lrrrrrrrrrrrr}
				\toprule
				Year & \multicolumn{2}{c}{2001} & \multicolumn{2}{c}{2002} & \multicolumn{2}{c}{2003} & \multicolumn{2}{c}{2004} & \multicolumn{2}{c}{2005} & \multicolumn{2}{c}{2006} \\
				Variable & \multicolumn{1}{c}{Estimate} & \multicolumn{1}{c}{SE} & \multicolumn{1}{c}{Estimate} & \multicolumn{1}{c}{SE} & \multicolumn{1}{c}{Estimate} & \multicolumn{1}{c}{SE} & \multicolumn{1}{c}{Estimate} & \multicolumn{1}{c}{SE} & \multicolumn{1}{c}{Estimate} & \multicolumn{1}{c}{SE} & \multicolumn{1}{c}{Estimate} & \multicolumn{1}{c}{SE} \\ \hline
				\texttt{Intercept} & \textbf{-2.6279} & (0.2983) & \textbf{-2.2934} & (0.2467) & \textbf{-2.6516} & (0.2163) & \textbf{-2.6749} & (0.2053) & \textbf{-2.4768} & (0.1856) & \textbf{-2.9873} & (0.1497) \\
				\texttt{Expo} & \textbf{1.6829} & (0.1707) & \textbf{1.6577} & (0.1553) & \textbf{1.7896} & (0.1414) & \textbf{1.8881} & (0.1505) & \textbf{1.5834} & (0.1366) & \textbf{2.0139} & (0.0941) \\
				\texttt{NEW} & \textbf{0.1435} & (0.0481) & 0.0764 & (0.0458) & \textbf{0.2100} & (0.0421) & \textbf{0.1081} & (0.0418) & \textbf{0.1191} & (0.0404) & \textbf{0.1661} & (0.0366) \\
				\texttt{UsageA} & \textbf{-0.5973} & (0.2161) & \textbf{-0.5447} & (0.1718) & \textbf{-0.4693} & (0.1418) & \textbf{-0.2855} & (0.1228) & \textbf{-0.2802} & (0.1100) & \textbf{-0.3787} & (0.1026) \\
				\texttt{UsageB} & -0.2298 & (0.1424) & -0.2010 & (0.1138) & \textbf{-0.2713} & (0.0973) & -0.0354 & (0.0852) & -0.0198 & (0.0766) & -0.1362 & (0.0704) \\
				\texttt{UsageC} & -0.3555 & (0.2610) & \textbf{-0.9650} & (0.2412) & \textbf{-0.4636} & (0.1547) & \textbf{-0.5592} & (0.1418) & \textbf{-0.4089} & (0.1188) & \textbf{-0.5231} & (0.1083) \\
				\texttt{UsageD} & 0.1983 & (0.1103) & 0.0646 & (0.0861) & 0.0853 & (0.0700) & \textbf{0.2155} & (0.0620) & \textbf{0.2166} & (0.0552) & \textbf{0.1699} & (0.0500) \\
				\texttt{UsageE} & \textbf{-1.1533} & (0.2071) & \textbf{-1.2651} & (0.1717) & \textbf{-1.0950} & (0.1346) & \textbf{-0.7246} & (0.1107) & \textbf{-1.0174} & (0.1086) & \textbf{-0.9716} & (0.0959) \\
				\texttt{UsageF} & 0.0132 & (0.1398) & -0.2026 & (0.1130) & -0.1516 & (0.0937) & -0.1234 & (0.0839) & -0.0768 & (0.0750) & \textbf{-0.1728} & (0.0693) \\
				\texttt{UsageG} & \textbf{-0.5374} & (0.1184) & \textbf{-0.6540} & (0.0935) & \textbf{-0.6396} & (0.0768) & \textbf{-0.4880} & (0.0677) & \textbf{-0.5108} & (0.0607) & \textbf{-0.5772} & (0.0554) \\
				\texttt{VehTypeA} & \textbf{0.6047} & (0.1960) & \textbf{0.3997} & (0.1547) & \textbf{0.5421} & (0.1365) & \textbf{0.5027} & (0.1169) & \textbf{0.5229} & (0.1071) & \textbf{0.5603} & (0.1009) \\
				\texttt{VehTypeB} & \textbf{0.5246} & (0.2251) & \textbf{0.6066} & (0.1776) & \textbf{0.6537} & (0.1551) & \textbf{0.6487} & (0.1359) & \textbf{0.5804} & (0.1238) & \textbf{0.6118} & (0.1162) \\
				\texttt{VehTypeC} & 0.3714 & (0.2284) & 0.2813 & (0.1808) & \textbf{0.4163} & (0.1578) & \textbf{0.4104} & (0.1355) & \textbf{0.5930} & (0.1224) & \textbf{0.4716} & (0.1151) \\
				\texttt{VehTypeD} & \textbf{-0.4348} & (0.2200) & \textbf{-0.7136} & (0.1767) & \textbf{-0.4599} & (0.1531) & \textbf{-0.5931} & (0.1320) & \textbf{-0.5552} & (0.1209) & \textbf{-0.4553} & (0.1128) \\
				\texttt{VehTypeE} & -0.3494 & (0.2627) & \textbf{-0.6352} & (0.2195) & \textbf{-0.4656} & (0.1880) & \textbf{-0.5587} & (0.1661) & \textbf{-0.3830} & (0.1469) & \textbf{-0.4163} & (0.1387) \\
				\texttt{VehTypeF} & \textbf{-0.8022} & (0.3754) & -0.2881 & (0.2483) & -0.1820 & (0.2079) & 0.0286 & (0.1733) & -0.0041 & (0.1591) & 0.0810 & (0.1458) \\
				\texttt{VehTypeG} & \textbf{-2.5736} & (0.3537) & \textbf{-2.7288} & (0.2777) & \textbf{-2.6548} & (0.2389) & \textbf{-3.0436} & (0.2159) & \textbf{-2.7649} & (0.1830) & \textbf{-2.6723} & (0.1696) \\
				\texttt{VehPowerA} & \textbf{-2.8699} & (0.1455) & \textbf{-2.9151} & (0.1203) & \textbf{-2.7994} & (0.0994) & \textbf{-3.0308} & (0.0878) & \textbf{-2.9733} & (0.0799) & \textbf{-2.8638} & (0.0732) \\
				\texttt{VehPowerB} & \textbf{-1.2881} & (0.1343) & \textbf{-1.3722} & (0.1095) & \textbf{-1.3312} & (0.0901) & \textbf{-1.4734} & (0.0779) & \textbf{-1.3998} & (0.0703) & \textbf{-1.3243} & (0.0639) \\
				\texttt{VehPowerC} & \textbf{-0.6938} & (0.1310) & \textbf{-0.6956} & (0.1063) & \textbf{-0.6464} & (0.0870) & \textbf{-0.7588} & (0.0749) & \textbf{-0.7495} & (0.0675) & \textbf{-0.6366} & (0.0611) \\
				\texttt{VehPowerD} & \textbf{-0.6503} & (0.1335) & \textbf{-0.5907} & (0.1081) & \textbf{-0.5636} & (0.0885) & \textbf{-0.5643} & (0.0759) & \textbf{-0.5611} & (0.0686) & \textbf{-0.4390} & (0.0620) \\ 
				\hhline{=============}
			\end{tabular}
		}
		\label{tab:reg_coef_french}
	\end{table}
	
\begin{table}[!h]
\caption{[French motor dataset] Pairwise serial dynamic test statistics (top right triangles) and the corresponding p-values (bottom left triangles) under the full model (left panel) and the reduced model (right panel).}
\begin{subtable}[t]{0.49\textwidth}
\centering
\resizebox{\textwidth}{!}{%
\begin{tabular}{l|rrrrrr}
\multicolumn{7}{c}{Serial dynamic test (Full model)} \\
\toprule
Year & \multicolumn{1}{c}{2001} & \multicolumn{1}{c}{2002} & \multicolumn{1}{c}{2003} & \multicolumn{1}{c}{2004} & \multicolumn{1}{c}{2005} & \multicolumn{1}{c}{2006} \\ \hline
2001 &  & 18.12 & 14.08 & 30.32 & 26.52 & 31.46 \\
2002 & 0.6411 &  & 13.22 & 21.94 & 14.12 & 20.26 \\
2003 & 0.8661 & 0.9006 &  & 27.07 & 20.66 & 18.74 \\
2004 & 0.0858 & 0.4029 & 0.1685 &  & 21.09 & 18.53 \\
2005 & 0.1874 & 0.8645 & 0.4798 & 0.4537 &  & 17.43 \\
2006 & 0.0663 & 0.5051 & 0.6015 & 0.6150 & 0.6847 & \\ 
\hhline{=======}
\end{tabular}
}
\end{subtable}
\begin{subtable}[t]{0.49\textwidth}
\centering
\resizebox{\textwidth}{!}{%
\begin{tabular}{c|rrrrrr}
\multicolumn{7}{c}{Serial dynamic test (Reduced model)} \\
\toprule
Year & \multicolumn{1}{c}{2001} & \multicolumn{1}{c}{2002} & \multicolumn{1}{c}{2003} & \multicolumn{1}{c}{2004} & \multicolumn{1}{c}{2005} & \multicolumn{1}{c}{2006} \\ \hline
2001 &  & 10.11 & 11.06 & 24.12 & 26.24 & 34.99 \\
2002 & 0.6849 &  & 3.46 & 14.76 & 8.06 & 14.43 \\
2003 & 0.6056 & 0.9957 &  & 14.96 & 12.63 & 18.28 \\
2004 & \textbf{0.0300} & 0.3227 & 0.3102 &  & 18.95 & 18.93 \\
2005 & \textbf{0.0158} & 0.8400 & 0.4771 & 0.1246 &  & 15.86 \\
2006 & \textbf{0.0008} & 0.3443 & 0.1473 & 0.1252 & 0.2566 &  \\
\hhline{=======}
\end{tabular}
}
\end{subtable}
\label{tab:wald_matrix_french}
\end{table}
	
We then investigate how the selection of covariates can influence the diagnostic test results. For the expository purpose, we further consider a ``reduced model" where we remove the variables in Table \ref{tab:reg_coef_french} that are insignificant in any of the six years. For example, the regression coefficients of levels B, C, D, and F of the \texttt{Usage} variable are insignificant in at least one of six years, so we merge these levels into the reference category of the variable \texttt{Usage}. In this case, we resemble an insurance company with a relatively poor underwriting process, failing to collect some policyholder information for pricing. After that, we perform the same procedures to compute the pairwise and aggregate serial dynamic test statistics $\Delta_{s,t,n}$ and $\Delta_n$ with a reduced number of variables included. The pairwise test results are presented in the right panel of Table \ref{tab:wald_matrix_french}, while the aggregate test statistic is $\Delta_n=95.4390$ (p-value $0.0083$). In contrast to the diagnostic results for the entire model, the null hypothesis of (\ref{null1}) is rejected under the reduced model. This suggests that there is a serial dynamic structure on the claims $(Z_{i,1},\ldots,Z_{i,T})$ over time conditioned on less information. 
Therefore, a complete underwriting process to collect as much useful policyholder information as possible is essential to ensure that the serial dynamics of a claim process are fully explained by the covariates, making predictive modeling of actuarial data more feasible even if $T$ is small.

Returning to the full model, we now test the conditional independence null hypotheses of (\ref{null_indep}) and (\ref{null_indep_pair}) by performing the correlation test for the French motor dataset. The left panel of Table \ref{tab:corr_french} showcases the sample correlations of $\{(Z_{i,t},Z_{i,s})\}_{i\in A_{t,s}}$ and the empirical residual correlations $\hat{\rho}_{s,t}$. Similar to the LGPIF dataset, the correlations from the French dataset are reduced after controlling for the covariates. However, the residual correlations for the French dataset do not decay as much as the LGPIF dataset as the time lag $l$ increases. In other words, the older claims may still be useful for predicting future claims. The right panel of Table \ref{tab:corr_french} presents the pairwise correlation test statistic $\Gamma_{s,t,n}$ and the p-values for each pair of $(s,t)$. Unlike the LGPIF dataset where $\Gamma_{s,t,n}$ is only significant when $l=|s-t|\leq 2$, we see that regardless of $s$ and $t$, $Z_{i,t}$ and $Z_{i,s}$ are significantly dependent conditioned on $(\bm{X}_{i,t},\bm{X}_{i,s})$ for the French data. This suggests using a long-term memory model than a short-term memory model in capturing the serial dependence of the French motor claims. Unsurprisingly, the aggregated correlation test statistic ($\Gamma_n=1073.04$ with $p<10^{-16}$) strongly rejects the null hypothesis of (\ref{null_indep}) as well. 

	\begin{table}[!h]
	\caption{[French motor dataset] Left panel: Sample (top right triangle) versus empirical residual correlations (bottom left triangle). Right panel: Pairwise independence test statistics (top right triangle) with p-values (bottom left triangle).}
		\begin{subtable}[t]{0.49\textwidth}
			\centering
			\resizebox{\textwidth}{!}{%
				\begin{tabular}{c|rrrrrr}
				\multicolumn{7}{c}{Correlation values} \\
					\toprule
					Year & \multicolumn{1}{c}{2001} & \multicolumn{1}{c}{2002} & \multicolumn{1}{c}{2003} & \multicolumn{1}{c}{2004} & \multicolumn{1}{c}{2005} & \multicolumn{1}{c}{2006} \\ \hline
					2001 &  & 0.2345 & 0.2243 & 0.2114 & 0.2206 & 0.2131 \\
					2002 & 0.1424 &  & 0.2371 & 0.2499 & 0.2309 & 0.2189 \\
					2003 & 0.1285 & 0.1327 &  & 0.2474 & 0.2364 & 0.2248 \\
					2004 & 0.1127 & 0.1375 & 0.1453 &  & 0.2361 & 0.2388 \\
					2005 & 0.1110 & 0.1305 & 0.1305 & 0.1293 &  & 0.2346 \\
					2006 & 0.1068 & 0.1127 & 0.1372 & 0.1353 & 0.1186 & \\
					\hhline{=======}
				\end{tabular}
			}
		\end{subtable}
		\begin{subtable}[t]{0.53\textwidth}
			\centering
			\resizebox{\textwidth}{!}{%
				\begin{tabular}{c|rrrrrr}
				\multicolumn{7}{c}{Correlation test}\\
					\toprule
					Year & \multicolumn{1}{c}{2001} & \multicolumn{1}{c}{2002} & \multicolumn{1}{c}{2003} & \multicolumn{1}{c}{2004} & \multicolumn{1}{c}{2005} & \multicolumn{1}{c}{2006} \\ \hline
					2001 &  & 141.28 & 134.61 & 130.76 & 144.33 & 131.53 \\
					2002 & \textbf{0.0000} &  & 200.49 & 206.07 & 167.69 & 155.91 \\
					2003 & \textbf{0.0000} & \textbf{0.0000} &  & 241.96 & 265.00 & 192.91 \\
					2004 & \textbf{0.0000} & \textbf{0.0000} & \textbf{0.0000} &  & 277.62 & 250.62 \\
					2005 & \textbf{0.0000} & \textbf{0.0000} & \textbf{0.0000} & \textbf{0.0000} &  & 329.02 \\
					2006 & \textbf{0.0000} & \textbf{0.0000} & \textbf{0.0000} & \textbf{0.0000} & \textbf{0.0000} &  \\
					\hhline{=======}
				\end{tabular}
			}
		\end{subtable}
		\label{tab:corr_french}
	\end{table}

\section{Simulation study}\label{sec:simu}
This section investigates the finite sample performance of the proposed tests and compares our tests to some standard tests in the literature. For simplicity, we study $T=2$ years only and consider the following two cases for the sample size: (i) $n_t=1117$ for both years, which is the number of observations in the LGPIF dataset under the BC peril in the year 2009; (ii) a larger sample size of $n_t=4468=1117\times 4$. For case (i), the covariates $\bm{X}_{i,t}$ of the $i$-th observation in year $t$ are directly copied from the covariates of the $i$-th observation of the LGPIF dataset in the year 2009, for $i=1,\ldots,n_t$ and $t=1,2$. For case (ii), $\{\bm{X}_{i,t}\}_{i=1,\ldots,n_t}$ is a fourfold duplicate from the LGPIF dataset (the year 2009). This simplified data structure is balanced with $n=n_1=n_2$. Also, note that $\bm{X}_{i,1}=\bm{X}_{i,2}$ for $i=1,\ldots,n$. The simulation study comprises the following steps:
	
\begin{itemize}
\item Step \romannumeral3 1) For $i=1,\ldots,n_t$, $Z_{i,1}$ given $\bm{X}_{i,1}$ is independently simulated from a logistic regression model with the regression coefficients given by the estimated values from the real dataset in the year 2009, see Table \ref{tab:reg_coef}. With an independent probability of $q\in[0,1]$, we set $Z_{i,2}=Z_{i,1}$. Otherwise, with a probability of $1-q$, $Z_{i,2}$ given $\bm{X}_{i,2}$ is independently simulated from a logistic regression model with the same set of regression coefficients.
\item Step \romannumeral3 2) For $t=1,2$, the simulated data $(Z_{i,t},\bm{X}_{i,t})_{i=1,\ldots,n_t}$ is fitted to a logistic regression model. The estimated parameters $\hat{\bm{\gamma}}_t^{(r)}$ are obtained.
\item Step \romannumeral3 3) Compute the serial dynamic test statistic $\Delta_n^{(r)}$ in (\ref{eq:stat:serial_agg}) and the correlation test statistic $\Gamma_n^{(r)}$ in (\ref{eq:test_stat_corr}) using the random weighted bootstrap method. Calculate the corresponding p-values, denoted as $\hat{p}_{1,n}^{(r)}$ and $\hat{p}_{2,n}^{(r)}$ for the serial dynamic test and the correlation test respectively.
\item Step iii4) Repeat the above three steps $R$ times to get $\{\hat{p}_{1,n}^{(r)}\}_{r=1,\ldots,R}$ and $\{\hat{p}_{2,n}^{(r)}\}_{r=1,\ldots,R}$.
\end{itemize}
	
It is easily to verify from the first step that $Z_{i,2}$ given $\bm{X}_{i,2}$ also marginally follows a logistic regression model. In this experiment, we also consider two choices of probability $q$: (a) $q=0$ ($Z_{i,1}$ and $Z_{i,2}$ are conditionally independent given $\bm{X}_{i,1}$ and $\bm{X}_{i,2}$); (b) $q=0.5$ ($Z_{i,1}$ and $Z_{i,2}$ are positively related). We employ $B=1000$ and $R=1000$. Theorems \ref{Th3} and \ref{Th6} suggest that both $\{\hat{p}_{1,n}^{(r)}\}_{r=1,\ldots,R}$ and $\{\hat{p}_{2,n}^{(r)}\}_{r=1,\ldots,R}$ approximately follow a $\text{Uniform}[0,1]$ distribution as $n\rightarrow\infty$ when $q=0$. When $q=0.5$, $\{\hat{p}_{1,n}^{(r)}\}_{r=1,\ldots,R}$ should still asymptotically follow a standard uniform distribution, but the correlation test should strongly reject the null.

\subsection{Finite sample performance of proposed tests}
Figures \ref{fig:sim:hist1} and \ref{fig:sim:hist2} present the empirical densities for $\{\hat{p}_{1,n}^{(r)}\}_{r=1,\ldots,R}$ (left panel) and $\{\hat{p}_{2,n}^{(r)}\}_{r=1,\ldots,R}$ (right panel) under $n_t=1117$ and $n_t=4468$ respectively when $q=0$. When $n_t=1117$, the empirical distribution of $\hat{p}_{1,n}^{(r)}$ looks reasonably uniform but the density of $\hat{p}_{2,n}^{(r)}$ is slightly tilted towards smaller p-values. The rejection probabilities at $(1\%, 5\%, 10\%)$ significance levels are $(1.2\%, 4.5\%, 9.7\%)$ for the serial dynamic test and $(2.3\%, 7.7\%, 14.0\%)$ for the correlation test. The rejection probabilities of the serial dynamic test are very close to the desired significance levels. Hence, $n_t=1117$ is a reasonable sample size for the serial dynamic test to perform satisfactorily for our LGPIF dataset. On the other hand, the rejection probabilities of the correlation test are consistently higher than the desired levels, indicating that the correlation test rejects the null hypothesis slightly more often than it should when the sample size is not larger. With a sufficiently large sample size ($n_t=4468$), the empirical distributions of both p-values seem uniform, and the rejection probabilities at $(1\%,5\%,10\%)$ levels are $(1.2\%,5.9\%,12.3\%)$ for the serial dynamic test and $(0.7\%,4.2\%,9.4\%)$ for the correlation test. Therefore, a larger sample size can effectively mitigate the bias of the correlation test p-values. 

Performing a similar analysis, Figure \ref{fig:sim:hist3} exhibits the empirical densities of the serial dynamic test p-value when $q=0.5$. The serial dynamic test still behaves properly when the claim indicators are conditionally dependent. Under this test, the rejection probabilities at $(1\%,5\%,10\%)$ levels are $(0.8\%,4.7\%,10.0\%)$ ($n_t=1117$) and $(1.1\%,4.4\%,10.1\%)$ ($n_t=4468$) respectively. When $q=0.5$, the correlation test always strongly rejects the null under both cases for the sample size with p-values very close to zero. For conciseness, we do not provide the corresponding density plots.

\begin{figure}[!h]
\begin{center}
\begin{subfigure}[h]{0.49\linewidth}
\includegraphics[width=\linewidth]{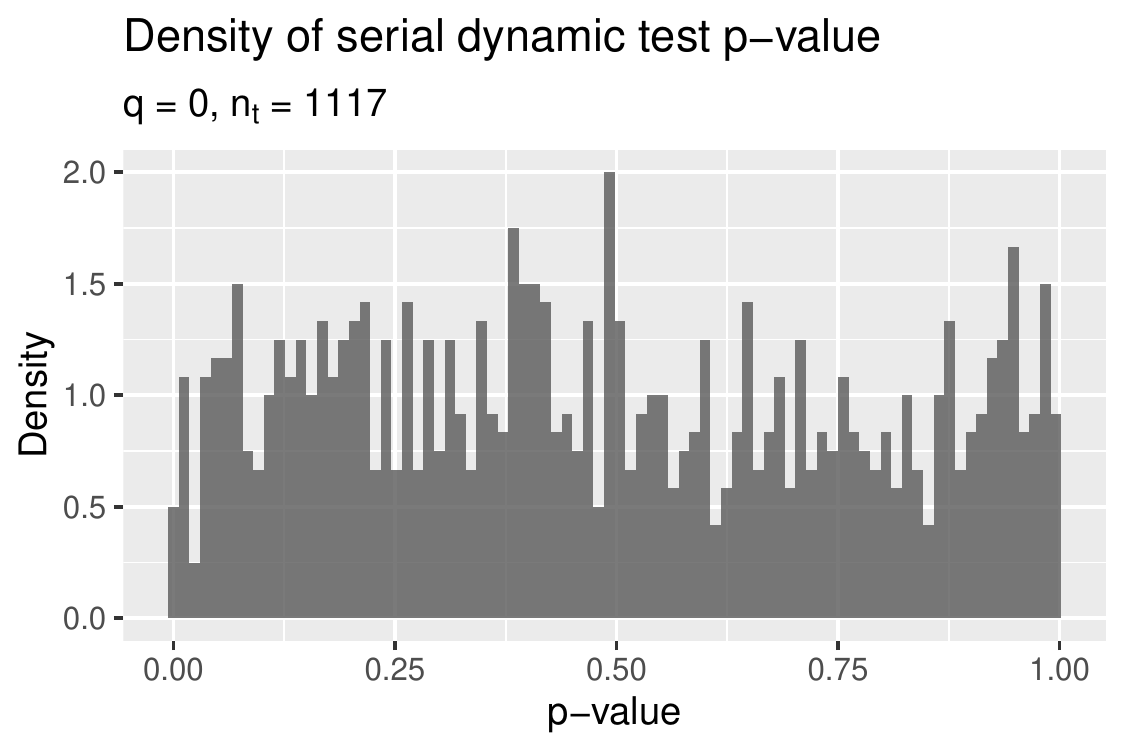}
\end{subfigure}
\begin{subfigure}[h]{0.49\linewidth}
\includegraphics[width=\linewidth]{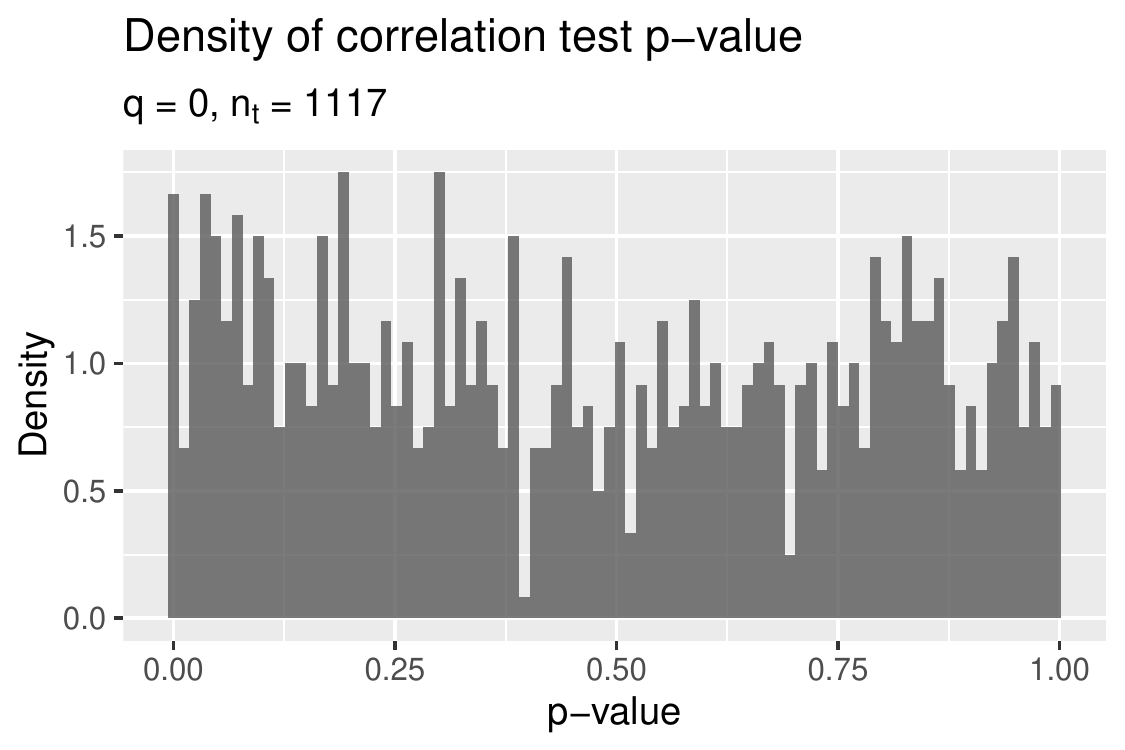}
\end{subfigure}
\end{center}
\vspace{-1.5em}
\caption{[Simulation study] Empirical distributions of the serial dynamic test p-values $\hat{p}_{1,n}^{(r)}$ (left panel) and the correlation test p-values $\hat{p}_{2,n}^{(r)}$ (right panel) under $n_t=1117$ when $q=0$.}
\label{fig:sim:hist1}
\end{figure}

\begin{figure}[!h]
\begin{center}
\begin{subfigure}[h]{0.49\linewidth}
\includegraphics[width=\linewidth]{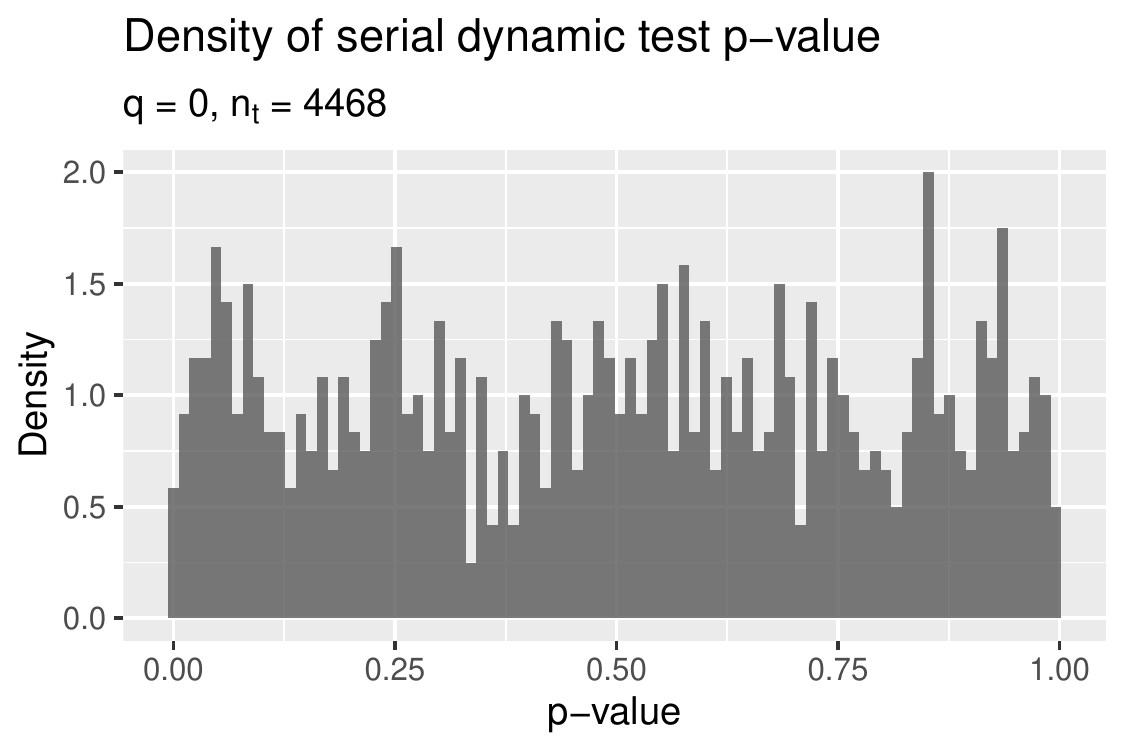}
\end{subfigure}
\begin{subfigure}[h]{0.49\linewidth}
\includegraphics[width=\linewidth]{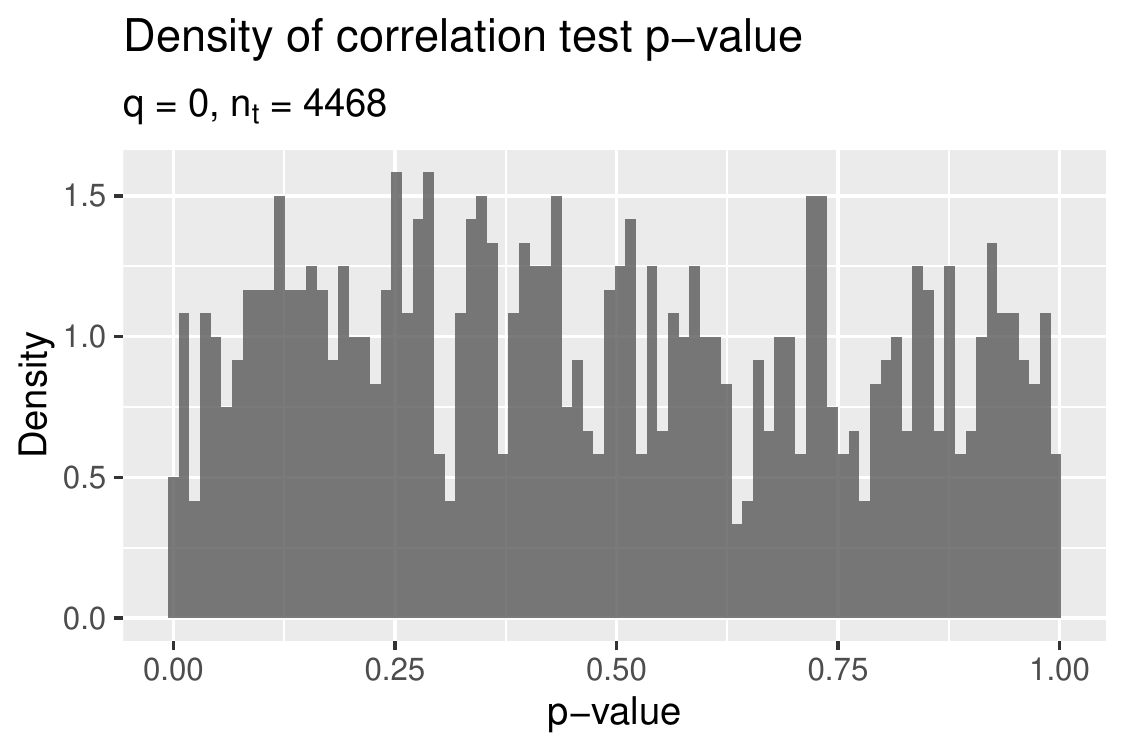}
\end{subfigure}
\end{center}
\vspace{-1.5em}
\caption{[Simulation study] Empirical distributions of the serial dynamic test p-values $\hat{p}_{1,n}^{(r)}$ (left panel) and the correlation test p-values $\hat{p}_{2,n}^{(r)}$ (right panel) under $n_t=4468$ when $q=0$.}
\label{fig:sim:hist2}
\end{figure}

\begin{figure}[!h]
\begin{center}
\begin{subfigure}[h]{0.49\linewidth}
\includegraphics[width=\linewidth]{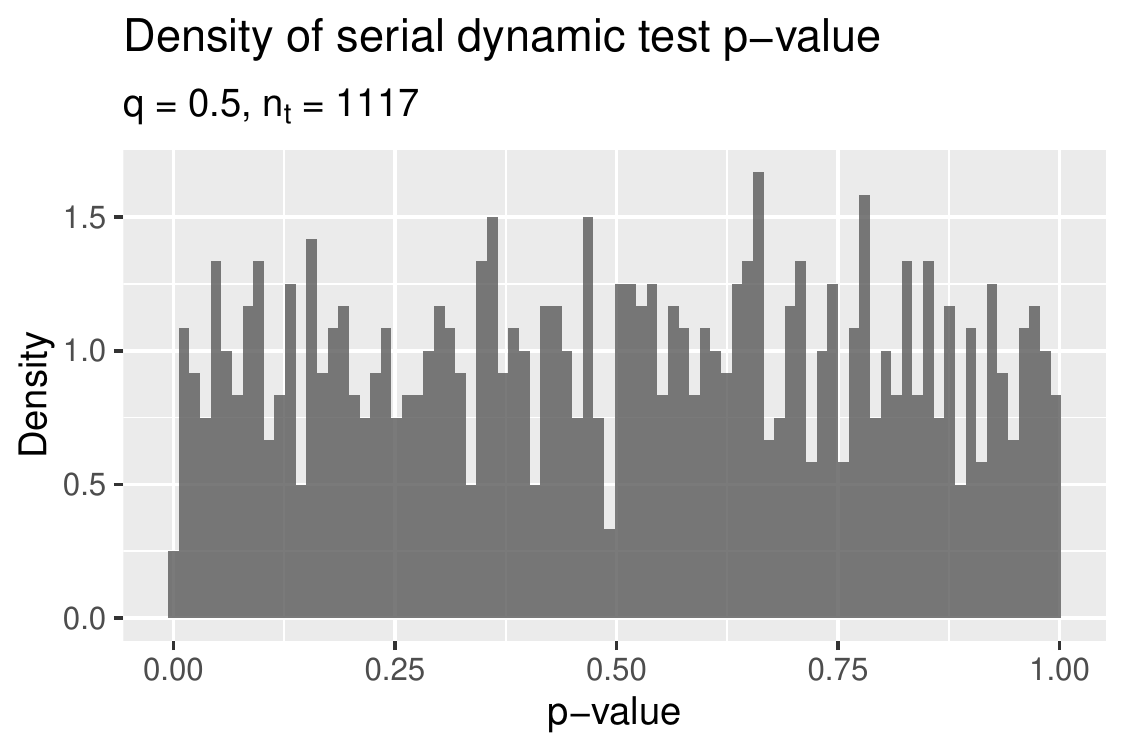}
\end{subfigure}
\begin{subfigure}[h]{0.49\linewidth}
\includegraphics[width=\linewidth]{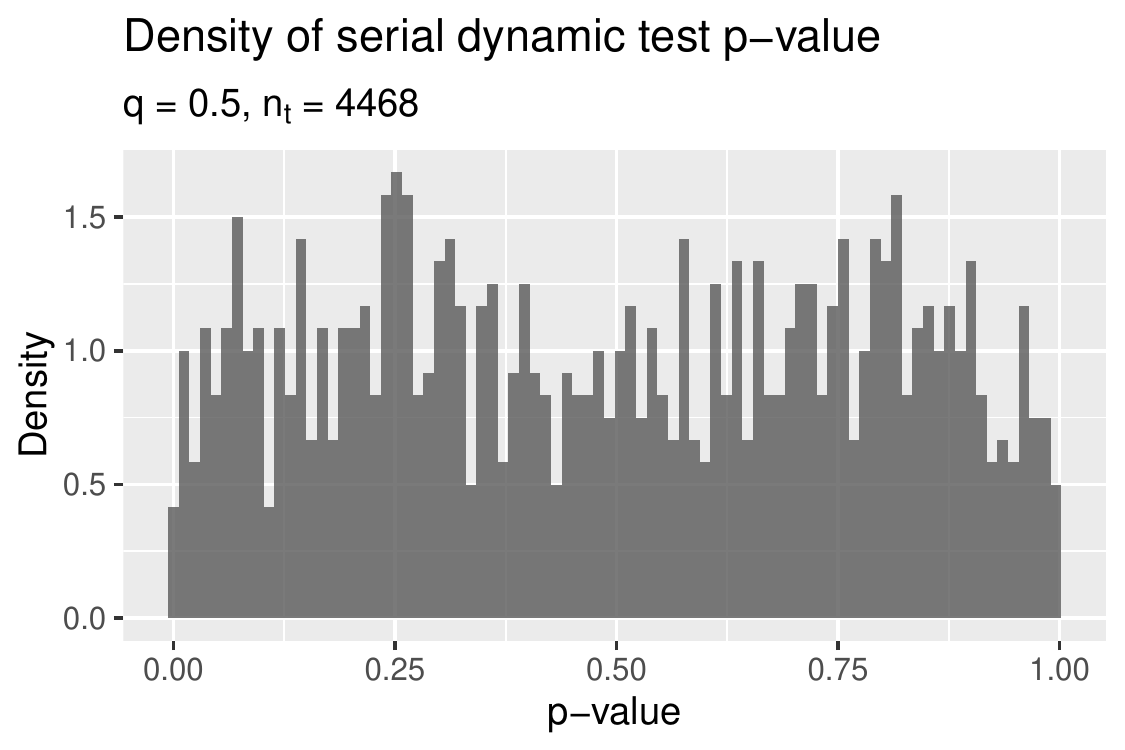}
\end{subfigure}
\end{center}
\vspace{-1.5em}
\caption{[Simulation study] Empirical distributions of serial dynamic test p-values $\hat{p}_{1,n}^{(r)}$ under $n_t=1117$ (left panel) and $n_t=4468$ (right panel) when $q=0.5$.}
\label{fig:sim:hist3}
\end{figure}

\subsection{Comparison studies}
With a simplified data structure (balanced data with $T=2$ years only), we may attempt to apply some standard tests in the literature to this simulation study. We aim to show that our proposed tests outperform the existing methods.
\subsubsection{Serial dynamic test}
To test the null hypothesis of (\ref{null1}), one may alternatively perform a \emph{naive likelihood ratio test} as follows. If we assume that $Z_{i,1}$ given $\bm{X}_{i,1}$ is conditionally independent of $Z_{i,2}$ given $\bm{X}_{i,2}$, the joint log-likelihood is given by
\begin{equation}
\mathcal{L}(\bm{\gamma}_1,\bm{\gamma}_2)=\sum_{t=1}^2\sum_{i=1}^{n_t}\left\{Z_{i,t}\log(p_{i,t})+(1-Z_{i,t})\log(1-p_{i,t})\right\}.
\end{equation}

Then, the MLE under the full model is given by $(\hat{\bm{\gamma}}_1,\hat{\bm{\gamma}}_2)=\arg\max_{(\bm{\gamma}_1,\bm{\gamma}_2)}\mathcal{L}(\bm{\gamma}_1,\bm{\gamma}_2)$ and the MLE under the reduced model is $\tilde{\bm{\gamma}}=\arg\max_{\bm{\gamma}_1}\mathcal{L}(\bm{\gamma}_1,\bm{\gamma}_1)$. The likelihood ratio test statistic for (\ref{null1}) is then constructed as $\Delta^{\text{LR}}_{1,2,n}=-2\times\left[\mathcal{L}(\tilde{\bm{\gamma}},\tilde{\bm{\gamma}})-\mathcal{L}(\hat{\bm{\gamma}}_1,\hat{\bm{\gamma}}_2)\right].$ Standard results show that $\Delta^{\text{LR}}_{1,2,n}$ asymptotically follows $\chi^2(P+1)$, so we reject (\ref{null1}) at level $a$ whenever $\Delta^{\text{LR}}_{1,2,n}>\chi^2_{1-a,P+1}$. Considering two cases of $q\in\{0,0.5\}$, we employ the naive likelihood ratio test to $R=1000$ simulated datasets with a sample size of $n_t=4468$ each, producing $1000$ p-values $\{\tilde{p}_{1,n}^{(r)}\}_{r=1,\ldots,R}$. Figure \ref{fig:sim:hist_naive} presents the empirical distributions of $\tilde{p}_{1,n}^{(r)}$ under the two choices of $q$. When $q=0$, the likelihood ratio test correctly specifies conditional independence assumption, so the empirical distribution of $\tilde{p}_{1,n}^{(r)}$ looks uniform, and the rejection probabilities at $(1\%,5\%,10\%)$ levels are $(0.7\%,5.9\%,11.3\%)$. Meanwhile, when $q=0.5$, the naive likelihood ratio test misspecifies the serial dependence assumption, so the empirical distribution of $\tilde{p}_{1,n}^{(r)}$ are severely distorted with the rejection probabilities at $(1\%,5\%,10\%)$ levels being $(0.0\%,0.0\%,0.1\%)$. In this case, the naive likelihood ratio test does not properly test the null. Comparing the right panel of Figure \ref{fig:sim:hist3} to that of Figure \ref{fig:sim:hist_naive}, the proposed serial dynamic test is more robust than the naive likelihood ratio test.

\begin{figure}[!h]
\begin{center}
\begin{subfigure}[h]{0.49\linewidth}
\includegraphics[width=\linewidth]{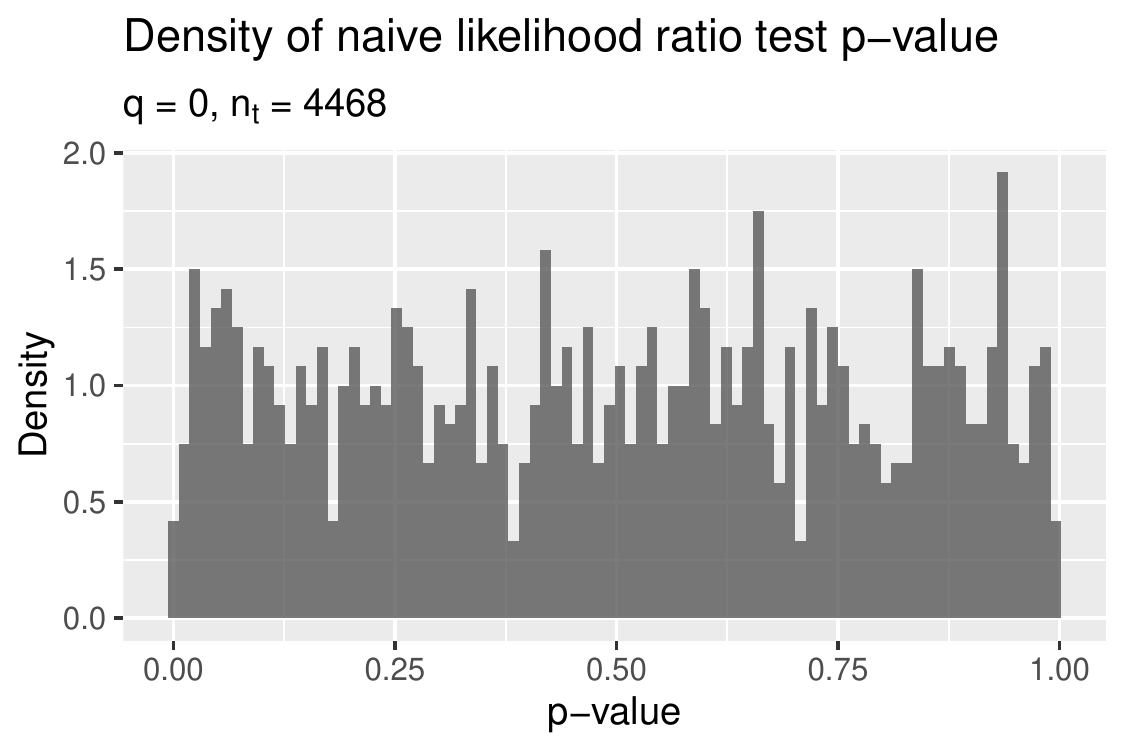}
\end{subfigure}
\begin{subfigure}[h]{0.49\linewidth}
\includegraphics[width=\linewidth]{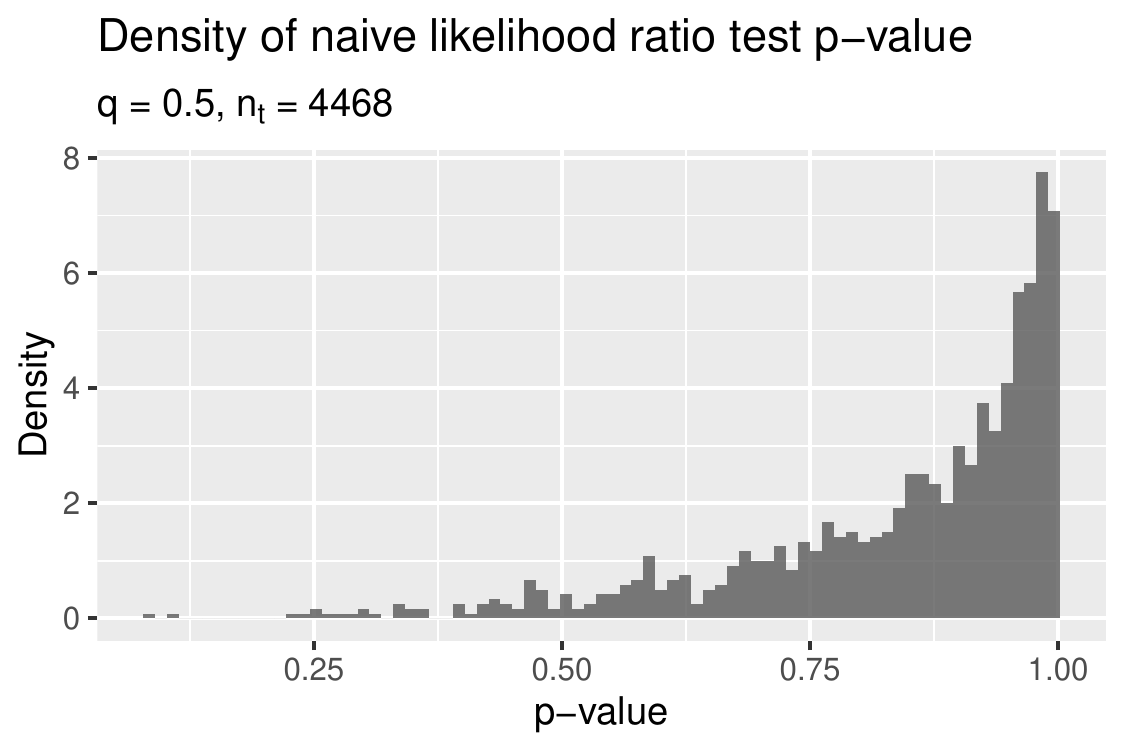}
\end{subfigure}
\end{center}
\vspace{-1.5em}
\caption{[Simulation study] Empirical distributions of naive likelihood ratio test p-values $\tilde{p}_{1,n}^{(r)}$ under $n_t=4468$ when $q=0$ (left panel) and $q=0.5$ (right panel).}
\label{fig:sim:hist_naive}
\end{figure}

\subsubsection{Correlation test} \label{sec:simu_corr}
To test the null hypothesis of (\ref{null_indep}) only for $T=2$ years, one may alternatively apply a \texttt{cor.test} function in \texttt{R} directly on the two empirical residuals $(Z_{i,1}-\hat{p}_{i,1})/\sqrt{\hat{p}_{i,1}(1-\hat{p}_{i,1})}$ and $(Z_{i,2}-\hat{p}_{i,2})/\sqrt{\hat{p}_{i,2}(1-\hat{p}_{i,2})}$, which is a standard t-test constructed based on the empirical estimate of residual correlation $\tilde{\rho}_{1,2}$ and its standard error $\text{SE}(\tilde{\rho}_{1,2})$. In this analysis, we compare $\text{SE}(\tilde{\rho}_{1,2})$ and the standard error produced by the proposed correlation test $\text{SE}(\hat{\rho}_{1,2})=\sqrt{\hat{\lambda}_{1,2}/n}$ to the true estimation uncertainty of the correlation, where $\hat{\lambda}_{1,2}$ is given by (\ref{eq:lambda_hat}). In this experiment, we consider $n_t=4468$ and $q=0.5$. Replicating the simulation by $R=1000$ times, we present the empirical distributions of $\text{SE}(\tilde{\rho}_{1,2})$ and $\text{SE}(\hat{\rho}_{1,2})$ in Figure \ref{fig:sim:hist_se}. The true estimation uncertainty of the correlation is approximated by the standard deviation of the estimated $\hat{\rho}_{1,2}$ (Equation (\ref{eq:corr_pair})) from the 1000 replications. We observe that $\text{SE}(\hat{\rho}_{1,2})$ under the proposed correlation test adheres closely to the true estimation uncertainty of 0.0219. Meanwhile, $\text{SE}(\tilde{\rho}_{1,2})$ under the standard t-test ranges from 0.0125 to 0.0135, significantly underestimating the true uncertainty. The main reason for underestimation is that the standard t-test fails to take the estimation uncertainty of $\hat{p}_{i,t}$ into account when $\text{SE}(\tilde{\rho}_{1,2})$, which depends on $\hat{p}_{i,t}$, is computed.

\begin{figure}[!h]
\begin{center}
\begin{subfigure}[h]{0.49\linewidth}
\includegraphics[width=\linewidth]{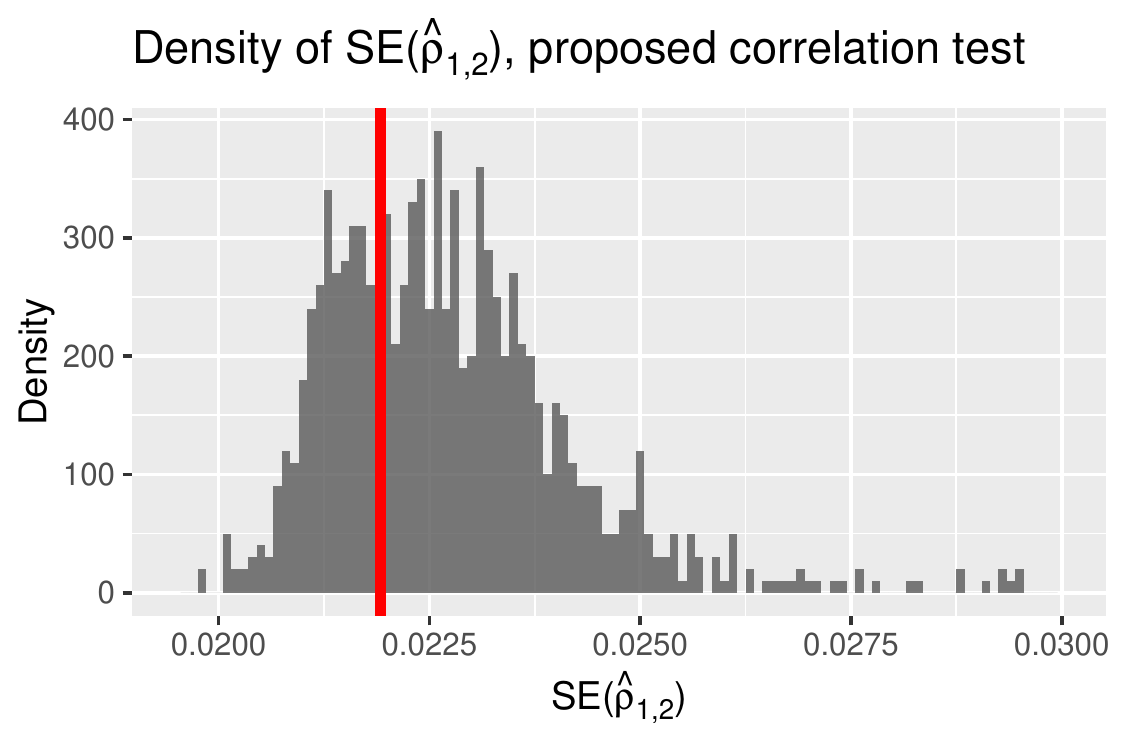}
\end{subfigure}
\begin{subfigure}[h]{0.49\linewidth}
\includegraphics[width=\linewidth]{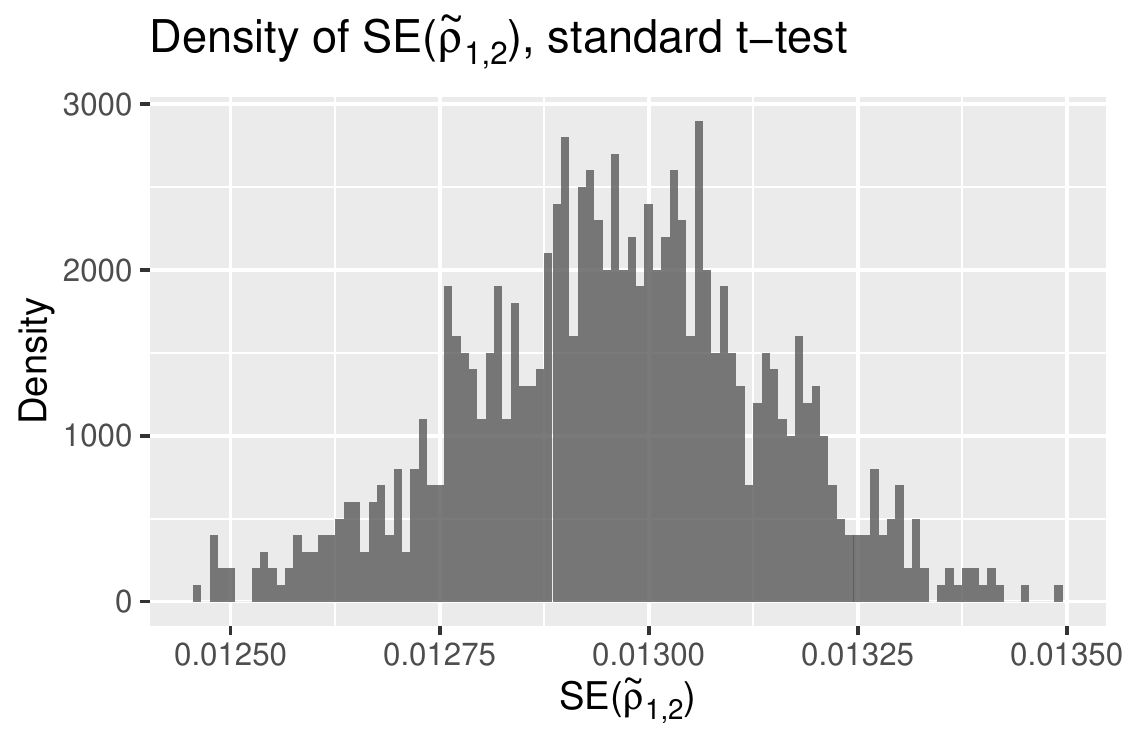}
\end{subfigure}
\end{center}
\vspace{-1.5em}
\caption{[Simulation study] Empirical distributions of the standard error of the residual correlation under the proposed correlation test ($\text{SE}(\hat{\rho}_{1,2})$, left panel) and the standard t-test using the \texttt{cor.test} function ($\text{SE}(\tilde{\rho}_{1,2})$, right panel). The vertical solid line in the left panel is the true uncertainty of the residual correlation.}
\label{fig:sim:hist_se}
\end{figure}

\section{Conclusion}\label{sec:conclusion}
A recent research interest in the actuarial literature is to model non-life insurance data over multiple years. Because of a smaller number of years, a constant time series dynamic is often imposed, implying that using longitudinal actuarial data is not necessary for forecast but useful in improving inference efficiency when the serial correlation is well captured.  In this paper, we develop two diagnostic tests, the \emph{serial dynamic test} and the \emph{correlation test}, to assess the assumptions. The serial dynamic test mainly detects non-constant serial dynamics of the conditional claim probability over time, which can hardly be validated statistically because of a short observation period. The correlation test evaluates the conditional independence of the claim 
probability over time, guiding in choosing an appropriate serial dependence structure for the longitudinal actuarial data.

The proposed methodologies are applied to two real insurance datasets. The serial dynamic test detects a structural change of the conditional claim probabilities over time for the LGPIF data but not the French motor data. The correlation test also reveals that the dependence structure of the LGPIF data may shift over time. Furthermore, the correlation test suggests using a short-range serial dependence model for the LGPIF dataset but a long-range dependence model for the French motor dataset. Overall, the French motor data generally does not violate the assumptions for the longitudinal models often made by the existing literature, justifying the use of the existing predictive models for longitudinal actuarial data to predict future claim distributions based on the last few years' data. The findings are the opposite for the LGPIF data, where the constant serial dynamic assumption is violated, and the use of the existing longitudinal models may result in misleading predictions.

This paper develops the diagnostic tests only based on the first step of the ratemaking process: the claim occurrence probabilities. Our future plan is to develop similar tests for risk measures, which are much more involved technically and numerically.
\section*{Acknowledgements}
We thank two reviewers for their helpful comments. Peng’s research was partly supported by the NSF grant of DMS-2012448. Qian's research was supported by the National Natural Science Foundation of China (12171158, 12071147, 12271171), the State Key Program of National Natural Science Foundation of China (71931004), Fundamental Research Funds for the Central Universities (2022QKT001), and the 111 Project (B14019).

\section{Proofs}\label{sec:proofs}
\begin{proof}[Proof of Theorem \ref{Th1}] 
Write
\[
	\begin{array}{ll}
	L_t(\boldsymbol{\gamma}_t)&=\sum_{i\in A_t}\{Z_{i,t}\log(p_{i,t})+(1-Z_{i,t})\log(1-p_{i,t})\}\\
	&=\sum_{i=1}^nI(i\in A_t)\{Z_{i,t}\log(p_{i,t})+(1-Z_{i,t})\log(1-p_{i,t})\}.
	\end{array}\]
Because
\[p_{i,t}=\frac{\exp(\alpha_t+\boldsymbol{\beta}_t^{\top}\boldsymbol{X}_{i,t})}{1+\exp(\alpha_t+\boldsymbol{\beta}_t^{\top}\boldsymbol{X}_{i,t})},\]
we have
\[
\left\{\begin{array}{ll}
&\frac{\partial p_{i,t}}{\partial \boldsymbol{\gamma}_t}=p_{i,t}\left( 1-p_{i,t}\right) \boldsymbol{\bar{X}}_{i,t},~
	\frac{\partial L_{t}(\boldsymbol{\gamma}_t)}{\partial \boldsymbol{\gamma}_t}=\sum_{i=1}^nI(i\in A_t)(Z_{i,t}-p_{i,t})\bar{\boldsymbol{X}}_{i,t},\\
&\frac{\partial^2 L_t(\boldsymbol{\gamma}_t)}{\partial \boldsymbol{\gamma}_t\partial \boldsymbol{\gamma}_t^\top}=-\sum_{i=1}^nI(i\in A_t)p_{i,t}\left( 1-p_{i,t}\right) 
	\boldsymbol{\bar{X}}_{i,t}\boldsymbol{\bar{X}}_{i,t}^\top.\end{array}\right.\]
For constant vectors $\boldsymbol{\lambda}_1,\cdots,\boldsymbol{\lambda}_T$ with the same dimension as $\bar{\boldsymbol{X}}_{i,t}$, using condition (C3), we have
\begin{equation}\label{pfTh1-2a}
E\{\sum_{t=1}^T\boldsymbol{\lambda}_t^{\top}\frac{\partial L_t(\boldsymbol{\gamma_t})}{\partial\boldsymbol{\gamma}_t}\}=\sum_{i=1}^n\sum_{t=1}^TI(i\in A_t)E\{\boldsymbol{\lambda}_t^{\top}\bar{\boldsymbol{X}}_{i,t}E(Z_{i,t}-p_{i,t}|\boldsymbol{X}_{i,t})\}=0,
\end{equation}
\begin{equation}\label{pfTh1-2b}
\begin{array}{ll}
&\frac 1n\sum_{i=1}^nE\{\sum_{t=1}^TI(i\in A_t)(Z_{i,t}-p_{i,t})\boldsymbol{\lambda}_t^{\top}\bar{\boldsymbol{X}}_{i,t}\}^2\\
=&\frac 1n\sum_{i=1}^n\sum_{t=1}^T\sum_{s=1}^TI(i\in A_t\cap A_s)E\{(Z_{i,t}-p_{i,t})(Z_{i,s}-p_{i,s})\boldsymbol{\lambda}_t^{\top}\bar{\boldsymbol{X}}_{i,t}\bar{\boldsymbol{X}}_{i,s}^{\top}\boldsymbol{\lambda}_s\}\\
=&\frac 1n\sum_{i=1}^n\sum_{t=1}^T\sum_{s=1}^TI(i\in A_t\cap A_s)E\{\boldsymbol{\lambda}_t^{\top}\bar{\boldsymbol{X}}_{i,t}\bar{\boldsymbol{X}}_{i,s}^{\top}\boldsymbol{\lambda}_sE((Z_{i,t}-p_{i,t})(Z_{i,s}-p_{i,s})|\boldsymbol{X}_{i,t}, \boldsymbol{X}_{i,s})\}\\
=&\frac 1n\sum_{i=1}^n\sum_{t=1}^T\sum_{s=1}^TI(i\in A_t\cap A_s)E\{\boldsymbol{\lambda}_t^{\top}\bar{\boldsymbol{X}}_{i,t}\bar{\boldsymbol{X}}_{i,s}^{\top}\boldsymbol{\lambda}_s(p_{i,t,s}-p_{i,t}p_{i,s})\}\\
=&\frac 1n\sum_{t=1}^T\sum_{s=1}^Tn_{t,s}E\{\boldsymbol{\lambda}_t^{\top}\bar{\boldsymbol{X}}_{1,t}\bar{\boldsymbol{X}}_{1,s}^{\top}\boldsymbol{\lambda}_s(p_{1,t,s}-p_{1,t}p_{1,s})\}\\
\to&\sum_{t=1}^T\sum_{s=1}^Ta_{t,s}\boldsymbol{\lambda}_t^{\top}E\{(p_{1,t,s}-p_{1,t}p_{1,s})\bar{\boldsymbol{X}}_{1,t}\bar{\boldsymbol{X}}_{1,s}^{\top}\}\boldsymbol{\lambda}_s
\end{array}\end{equation}
and
\begin{equation}\label{pfTh1-2c}\sum_{i=1}^nE\{|\sum_{t=1}^TI(i\in A_t)(Z_{i,t}-p_{i,t})\boldsymbol{\lambda}_t^{\top}\bar{\boldsymbol{X}}_{i,t}|^{2+\delta}\}=O(n)=o(n^{1+\frac{\delta}{2}}).
\end{equation}
It follows from (\ref{pfTh1-2a}), (\ref{pfTh1-2b}), (\ref{pfTh1-2c}), and the central limit theorem for a sum of independent but not identically distributed random variables (see the Corollary in Page 30 of 	\cite{Serfling2002}) that
\begin{equation}\label{pfTh1-2}
\frac 1{\sqrt n}\sum_{t=1}^T\boldsymbol{\lambda}_t^{\top}\frac{\partial L_t(\boldsymbol{\gamma}_t)}{\partial\boldsymbol{\gamma}_t}\overset{d}{\to}N(0, \sum_{t=1}^T\sum_{s=1}^Ta_{t,s}\boldsymbol{\lambda}_t^{\top}E\{(p_{1,t,s}-p_{1,t}p_{1,s})\bar{\boldsymbol{X}}_{1,t}\bar{\boldsymbol{X}}_{1,s}^{\top}\}\boldsymbol{\lambda}_s).\end{equation}
 By (\ref{pfTh1-2}) and the Cram\'er-Wold theorem, we have
 \begin{equation}
 \label{pfTh1-3}
 \frac 1{\sqrt n}(\frac{\partial L_1(\boldsymbol{\gamma}_1)}{\partial\boldsymbol{\gamma}_1^{\top}},\cdots,\frac{\partial L_T(\boldsymbol{\gamma}_T)}{\partial\boldsymbol{\gamma}_T^{\top}})^{\top}\overset{d}{\to} (\boldsymbol{W}_1^{\top},\cdots,\boldsymbol{W}_T^{\top})^{\top},
 \end{equation}
where $\boldsymbol{W}_t$'s are defined in Theorem \ref{Th1}.
 It follows from the law of large numbers for a sum of independent variables that
\begin{equation}\label{pfTh1-4}
\frac 1{n}	\frac{\partial^2 L_t(\boldsymbol{\gamma}_t)}{\partial \boldsymbol{\gamma}_t\partial \boldsymbol{\gamma}_t^\top}\overset{p}{\to}-a_t\Sigma_t.
\end{equation}
Therefore, it follows from the Taylor expansion that
\[
\boldsymbol{0}=\frac 1{\sqrt n}\frac{\partial L_t(\hat{\boldsymbol{\gamma}}_t)}{\partial\boldsymbol{\gamma}_t}=\frac 1{\sqrt n}\frac{\partial L_t({\boldsymbol{\gamma}}_t)}{\partial\boldsymbol{\gamma}_t}+\frac 1n\frac{\partial^2L_t(\boldsymbol{\gamma}_t)}{\partial\boldsymbol{\gamma}_t\partial\boldsymbol{\gamma}_t^{\top}}\sqrt n(\hat{\boldsymbol{\gamma}}_t-\boldsymbol{\gamma}_t)+o_p(1),
\]
i.e.,
\begin{equation}\label{pfTh1-5}
\sqrt n(\hat{\boldsymbol{\gamma}}_t-\boldsymbol{\gamma}_t)=-\{\frac 1n\frac{\partial^2L_t(\boldsymbol{\gamma}_t)}{\partial\boldsymbol{\gamma}_t\partial\boldsymbol{\gamma}_t^{\top}}\}^{-1}\frac 1{\sqrt n}\frac{\partial L_t({\boldsymbol{\gamma}}_t)}{\partial\boldsymbol{\gamma}_t}+o_p(1),\end{equation}
implying that
	\[\begin{array}{ll}
	&\sqrt{n}(\hat{\boldsymbol{\gamma}}_1^{\top}-\boldsymbol{\gamma}_1^{\top},\cdots,\hat{\boldsymbol{\gamma}}_T^{\top}-\boldsymbol{\gamma}_T^{\top})^{\top}\\
	=&\left((a_1^{-1}\Sigma_1^{-1}\frac 1{\sqrt n}\frac{\partial L_1(\boldsymbol{\gamma}_1)}{\partial\boldsymbol{\gamma}_1})^{\top},\cdots, (a_T^{-1}\Sigma_T^{-1}\frac{\partial L_T(\boldsymbol{\gamma}_T)}{\partial\boldsymbol{\gamma}_T})^{\top}\right)^{\top}+o_p(1)\\	
	\overset{d}{\to}&\left(a_1^{-1}\left( \Sigma_1^{-1}\boldsymbol{W}_1\right) ^{\top},\cdots, a_T^{-1}\left( \Sigma_T^{-1}\boldsymbol{W}_T\right) ^{\top}\right)^{\top}\end{array}\]
by  (\ref{pfTh1-3}) and (\ref{pfTh1-4}). Hence,  the theorem follows from the delta method.
\end{proof}

\begin{proof}[Proof of Theorem \ref{Th2}]
Write
\[L_t^b(\boldsymbol{\gamma}_t)=\sum_{i=1}^n\delta_i^bI(i\in A_t)\{Z_{i,t}\log(p_{i,t})+(1-Z_{i,t})\log(1-p_{i,t})\}.\]
Then, we have
\[
\left\{\begin{array}{ll}
	&\frac{\partial L_{t}^b(\boldsymbol{\gamma}_t)}{\partial \boldsymbol{\gamma}_t}-\frac{\partial L_{t}(\boldsymbol{\gamma}_t)}{\partial \boldsymbol{\gamma}_t}=\sum_{i=1}^n(\delta_i^b-1)I(i\in A_t)(Z_{i,t}-p_{i,t})\bar{\boldsymbol{X}}_{i,t},\\
&	\frac{\partial^2 L_t^b(\boldsymbol{\gamma}_t)}{\partial \boldsymbol{\gamma}_t\partial \boldsymbol{\gamma}_t^\top}=-\sum_{i=1}^n\delta_i^bI(i\in A_t)p_{i,t}\left( 1-p_{i,t}\right) 
	\boldsymbol{\bar{X}}_{i,t}\boldsymbol{\bar{X}}_{i,t}^\top.\end{array}\right.\]
Because $E(\delta_i^b)=1$, $E(\delta_i^b-1)^2=1$, and $\{\delta_i^b\}$ is independent of $Z_{i,t}$'s and $\boldsymbol{X}_{i,t}$'s, similar to (\ref{pfTh1-3}) and (\ref{pfTh1-4}), we have
\begin{equation}\label{pfTh2-1}
\frac 1n\frac{\partial^2 L_t^b(\boldsymbol{\gamma}_t)}{\partial\boldsymbol{\gamma}_t\partial\boldsymbol{\gamma}_t^{\top}}=-a_t\Sigma_t+o_p(1)
\end{equation}
and
\begin{equation}\label{pfTh2-2}
\frac 1{\sqrt n}(\frac{\partial L_1^b(\boldsymbol{\gamma}_1)}{\partial\boldsymbol{\gamma}_1^{\top}}-\frac{\partial L_1(\boldsymbol{\gamma}_1)}{\partial\boldsymbol{\gamma}_1^{\top}}, \cdots, \frac{\partial L_T^b(\boldsymbol{\gamma}_T)}{\partial\boldsymbol{\gamma}_T^{\top}}-\frac{\partial L_T(\boldsymbol{\gamma}_T)}{\partial\boldsymbol{\gamma}_T^{\top}})^{\top}\overset{d}{\to}(\boldsymbol{W}_1^{b\top},\cdots,\boldsymbol{W}_T^{b\top})^{\top},\end{equation}
which is independent of and has the same distribution as $(\boldsymbol{W}_1^{\top},\cdots,\boldsymbol{W}_T^{\top})^{\top}$ defined in Theorem \ref{Th1}.
Like the proof of (\ref{pfTh1-5}), the Taylor expansion  and (\ref{pfTh2-1}) yield
\begin{equation}\label{pfTh2-3}
\begin{array}{ll}
\sqrt n(\hat{\boldsymbol{\gamma}}_t^b-\boldsymbol{\gamma}_t)&=-\{\frac 1n\frac{\partial^2L_t^b(\boldsymbol{\gamma}_t)}{\partial\boldsymbol{\gamma}_t\partial\boldsymbol{\gamma}_t^{\top}}\}^{-1}\frac 1{\sqrt n}\frac{\partial L_t^b({\boldsymbol{\gamma}}_t)}{\partial\boldsymbol{\gamma}_t}+o_p(1)\\
&= a_t^{-1}\Sigma_t ^{-1}\frac 1{\sqrt n}\frac{\partial L_t^b({\boldsymbol{\gamma}}_t)}{\partial\boldsymbol{\gamma}_t}+o_p(1).\end{array}\end{equation}
By (\ref{pfTh1-5}) and (\ref{pfTh2-3}), we have
\begin{equation}\label{pfTh2-4}
\sqrt n(\hat{\boldsymbol{\gamma}}_t^b-\hat{\boldsymbol{\gamma}}_t)
=a_t^{-1}\Sigma_t^{-1}\{\frac 1{\sqrt n}\frac{\partial L_t^b({\boldsymbol{\gamma}}_t)}{\partial\boldsymbol{\gamma}_t}-\frac 1{\sqrt n}\frac{\partial L_t({\boldsymbol{\gamma}}_t)}{\partial\boldsymbol{\gamma}_t}\}+o_p(1).\end{equation}
Hence, it follows  (\ref{pfTh2-2}), and (\ref{pfTh2-4}) that the joint limit of $\sqrt n(\hat{\boldsymbol{\gamma}}_1^b-\hat{\boldsymbol{\gamma}}_1), \cdots, \sqrt n(\hat{\boldsymbol{\gamma}}_T^b-\hat{\boldsymbol{\gamma}}_T)$ is the same as that of
$\sqrt n(\hat{\boldsymbol{\gamma}}_1-\boldsymbol{\gamma}_1), \cdots, \sqrt n(\hat{\boldsymbol{\gamma}}_T-\boldsymbol{\gamma}_T)$, i.e., the theorem holds.
\end{proof}

\begin{proof}[Proof of Theorem \ref{Th3}]
It directly follows from Theorems \ref{Th1} and \ref{Th2}.
\end{proof}

\begin{proof}[Proof of Theorem \ref{Th4}]
	Define $f(x,y)=\frac{y-x}{\sqrt{x(1-x)}}$.  Then
	\[
		\frac{\partial f(x,y)}{\partial x}=\frac{2xy-x-y}{2\left\lbrace x(1-x)\right\rbrace ^{\frac{3}{2}}}~\text{and}~\frac{\partial f(p_{i,t},Z_{i,t})}{\partial \bm{\gamma_t}}=\frac{2Z_{i,t}p_{i,t}-Z_{i,t}-p_{i,t}}{2\sqrt{ p_{i,t}(1-p_{i,t})} }\bar{\boldsymbol{X}}_{i,t}.
	 \]
We only derive the limit of $\hat{\rho}_{s,t}$ for any $s<t$ as other cases can be done similarly.
An expansion of $\sqrt{n}{\hat{\rho}_{s,t}}$ at $(\boldsymbol{\gamma}_s^\top,\boldsymbol{\gamma}_t^\top)^\top$ gives
	\begin{align}\label{cor-Talyor}
		\sqrt{n}{\hat{\rho}_{s,t}}&=\sqrt{n} \frac{1}{n_{s,t}}\sum_{i\in A_{s}\cap A_t}f(p_{i,s},Z_{i,s})f(p_{i,t},Z_{i,t})\nonumber\\
		&\quad+\frac{1}{n_{s,t}}\sum_{i\in A_{s}\cap A_t}\frac{\partial f(p_{i,s},Z_{i,s})}{\partial \bm{\gamma}_s^\top}f(p_{i,t},Z_{i,t})\sqrt{n}\left( \hat{\bm{\gamma}}_s-\bm{\gamma}_s\right)\nonumber\\
		&\quad+\frac{1}{n_{s,t}}\sum_{i\in A_{s}\cap A_t}\frac{\partial f(p_{i,t},Z_{i,t})}{\partial \bm{\gamma}_t^\top}f(p_{i,s},Z_{i,s})\sqrt{n}\left( \hat{\bm{\gamma}}_t-\bm{\gamma}_t\right)+o_p(1).
	\end{align}
	Because
	\begin{align*}
		&E||\frac{2Z_{i,t}p_{i,t}-Z_{i,t}-p_{i,t}}{2\sqrt{p_{i,t}(1-p_{i,t})}}\frac{Z_{i,s}-p_{i,s}}{\sqrt{ p_{i,s}(1-p_{i,s})}}\bar{\boldsymbol{X}}_{i,t}||\\
		&=\frac{1}{2}E\left[ ||\bar{\boldsymbol{X}}_{i,t}|| E\left\lbrace \frac{|2Z_{i,t}p_{i,t}-Z_{i,t}-p_{i,t}|}{\sqrt{ p_{i,t}(1-p_{i,t})}}\frac{|Z_{i,s}-p_{i,s}|}{\sqrt{ p_{i,s}(1-p_{i,s})}}|\boldsymbol{X}_{i,s},\boldsymbol{X}_{i,t}\right\rbrace \right] \\
		&\leq \frac{1}{2}E\left[ ||\bar{\boldsymbol{X}}_{i,t}|| \sqrt{E\left\lbrace \frac{(2Z_{i,t}p_{i,t}-Z_{i,t}-p_{i,t})^2}{ p_{i,t}(1-p_{i,t})}|\boldsymbol{X}_{i,t}\right\rbrace E\left\lbrace \frac{(Z_{i,s}-p_{i,s})^2}{ p_{i,s}(1-p_{i,s}) }|\boldsymbol{X}_{i,s}\right\rbrace} \right] \\
		&\leq\frac{1}{2}E||\bar{\boldsymbol{X}}_{i,t}||<\infty,
	\end{align*}
	it follows from the weak law of large numbers and condition (C6) that
	\begin{equation*}
		\left\lbrace \begin{array}{ll}
			\frac{1}{n_{s,t}}\sum\limits_{i\in A_{s}\cap A_t}\frac{\partial f(p_{i,s},Z_{i,s})}{\partial \bm{\gamma}_s^{\top}}f(p_{i,t},Z_{i,t})=\omega_{s,t} +o_p(1),\\
			\frac{1}{n_{s,t}}\sum\limits_{i\in A_{s}\cap A_t}\frac{\partial f(p_{i,t},Z_{i,t})}{\partial \bm{\gamma}_t^{\top}}f(p_{i,s},Z_{i,s})=\omega_{t,s}+o_p(1),
		\end{array}\right.
	\end{equation*}
	where 
	$$\omega_{s,t}=E\left[ \frac{(p_{1,s,t}-p_{1,s}p_{1,t})(2p_{1,s}-1)}{2\sqrt{ p_{1,s}(1-p_{1,s})}\sqrt{ p_{1,t}(1-p_{1,t})}}\bar{\boldsymbol{X}}_{1,s}^\top\right] $$
	and
	$$\omega_{t,s}=E\left[ \frac{(p_{1,s,t}-p_{1,s}p_{1,t})(2p_{1,t}-1)}{2\sqrt{ p_{1,s}(1-p_{1,s})}\sqrt{ p_{1,t}(1-p_{1,t})}}\bar{\boldsymbol{X}}_{1,t}^\top\right].$$
	By (\ref{cor-Talyor}), Theorem \ref{Th1}, and the Slutsky theorem, we have
	\begin{align}\label{Cor}
		\sqrt{n}{\hat{\rho}_{s,t}}&=\frac{1}{\sqrt{n}}\sum_{i=1}^n \{ a_{s,t}^{-1}I(i\in  A_{s}\cap A_t)\frac{Z_{i,s}-p_{i,s}}{\sqrt{p_{i,s}(1-p_{i,s})}}\frac{Z_{i,t}-p_{i,t}}{\sqrt{p_{i,t}(1-p_{i,t})}} \nonumber\\
		&\quad+a_s^{-1}\omega_{s,t}\Sigma_s^{-1}I(i \in A_s) (Z_{i,s}-p_{i,s})\bar{\boldsymbol{X}}_{i,s}\nonumber\\
		&\quad+a_t^{-1}\omega_{t,s}\Sigma_t^{-1}I(i \in A_t) (Z_{i,t}-p_{i,t})\bar{\boldsymbol{X}}_{i,t}\}+o_p(1)\nonumber\\
		&:=\frac{1}{\sqrt{n}}\sum_{i=1}^n\left( J_{i,s,t}+J_{i,s}+J_{i,t}\right). 
	\end{align}
	Since \begin{equation}\label{cor-mean}
	E \sum_{i=1}^n\left( J_{i,s,t}+J_{i,s}+J_{i,t}\right)  =n_{s,t}a_{s,t}^{-1}\rho _{s,t},
		\end{equation}
	\begin{align}\label{cor-var}
		&\frac{1}{n} \sum_{i=1}^nE\left(  J_{i,s,t}+J_{i,s}+J_{i,t}-EJ_{i,s,t}\right)  ^2\nonumber\\&=\frac{1}{n}\sum_{i=1}^n \left[ \right.I(i\in  A_{s}\cap A_t)a_{s,t}^{-2}E\left\lbrace \frac{Z_{i,s}-p_{i,s}}{\sqrt{p_{i,s}(1-p_{i,s})}}\frac{Z_{i,t}-p_{i,t}}{\sqrt{p_{i,t}(1-p_{i,t}})} -\rho_{s,t}\right\rbrace^2 \nonumber\\
		&+I(i \in A_s)a_s^{-2}\omega_{s,t}\Sigma_s^{-1} E\left\lbrace (Z_{i,s}-p_{i,s})^2\bar{\boldsymbol{X}}_{i,s}\bar{\boldsymbol{X}}_{i,s}^\top\right\rbrace 
		\Sigma_s^{-1}\omega_{s,t}^\top\nonumber\\
		&+I(i \in A_t)a_t^{-2}\omega_{t,s}\Sigma_t^{-1} E\left\lbrace (Z_{i,t}-p_{i,t})^2\bar{\boldsymbol{X}}_{i,t}\bar{\boldsymbol{X}}_{i,t}^\top\right\rbrace \Sigma_t^{-1}\omega_{t,s}^\top\nonumber\\
		&+2I(i\in  A_{s}\cap A_t)a_{s,t}^{-1}a_s^{-1}\omega_{s,t}\Sigma_s^{-1}E\left\lbrace  \frac{(Z_{i,s}-p_{i,s})^2}{\sqrt{p_{i,s}(1-p_{i,s}})}\frac{Z_{i,t}-p_{i,t}}{\sqrt{p_{i,t}(1-p_{i,t}})} \bar{\boldsymbol{X}}_{i,s} \right\rbrace\nonumber\\
		&+2I(i\in  A_{s}\cap A_t)a_{s,t}^{-1}a_t^{-1}\omega_{t,s}\Sigma_t^{-1}E\left\lbrace  \frac{Z_{i,s}-p_{i,s}}{\sqrt{p_{i,s}(1-p_{i,s}})}\frac{(Z_{i,t}-p_{i,t})^2}{\sqrt{p_{i,t}(1-p_{i,t}})} \bar{\boldsymbol{X}}_{i,t} \right\rbrace\nonumber\\
		&+2I(i \in A_s \cap A_t)a_s^{-1}a_t^{-1}\omega_{s,t}\Sigma_s^{-1} E\left\lbrace (Z_{i,s}-p_{i,s})(Z_{i,t}-p_{i,t})\bar{\boldsymbol{X}}_{i,s}\bar{\boldsymbol{X}}_{i,t}^\top\right\rbrace\Sigma_t^{-1}\omega_{t,s}^\top \left.\right]\nonumber \\
		&\overset{p}\to  a_{s,t}^{-1}E\left\lbrace \frac{(p_{1,s,t}-p_{1,s}p_{1,t})(1-2p_{1,s})(1-2p_{1,t})}{p_{1,s}(1-p_{1,s})p_{1,t}(1-p_{1,t})}\right\rbrace  +a_{s,t}^{-1}(1-\rho_{s,t}^2)-3a_{s}^{-1}\omega_{s,t}\Sigma_s^{-1}\omega_{s,t}^\top\nonumber\\
		&-3a_t^{-1}\omega_{t,s}\Sigma_t^{-1} \omega_{t,s}^\top
		+2a_{s,t}a_s^{-1}a_t^{-1}\omega_{s,t}\Sigma_s^{-1} E\left\lbrace (p_{1,s,t}-p_{1,s}p_{1,t})\bar{\boldsymbol{X}}_{1,s}\bar{\boldsymbol{X}}_{1,t}^\top\right\rbrace\Sigma_t^{-1}\omega_{t,s}^\top:=\lambda_{s,t},
	\end{align}
and it follows from conditions (C3) and (C7) that
\begin{align}\label{cor-Lya}
&\sum_{i=1}^nE|J_{i,s,t}+J_{i,s}+J_{i,t}-EJ_{i,s,t} |^{2+\delta}\nonumber\\
&=\sum_{i=1}^nO(1)\times E\left\{\left[\frac{Z_{i,s}-p_{i,s}}{\sqrt{p_{i,s}(1-p_{i,s})}}\frac{Z_{i,t}-p_{i,t}}{\sqrt{p_{i,t}(1-p_{i,t})}}\right]^{2+\delta}\right\}\nonumber\\
&\leq\sum_{i=1}^nO(1)\times E\Bigg\{\left(\frac{1-p_{i,s}}{p_{i,s}}\frac{1-p_{i,t}}{p_{i,t}}\right)^{1+\frac{\delta}{2}}+\left(\frac{1-p_{i,s}}{p_{i,s}}\frac{p_{i,t}}{1-p_{i,t}}\right)^{1+\frac{\delta}{2}}\nonumber\\
&\hspace{8em}+\left(\frac{p_{i,s}}{1-p_{i,s}}\frac{1-p_{i,t}}{p_{i,t}}\right)^{1+\frac{\delta}{2}}+\left(\frac{p_{i,s}}{1-p_{i,s}}\frac{p_{i,t}}{1-p_{i,t}}\right)^{1+\frac{\delta}{2}}\Bigg\}\nonumber\\
&=\sum_{i=1}^nO(1)\times E\Big\{e^{(1+\frac{\delta}{2})(-\bm{\gamma}_s^{\top}\bar{\bm{X}}_{i,s}-\bm{\gamma}_t^{\top}\bar{\bm{X}}_{i,t})}+e^{(1+\frac{\delta}{2})(-\bm{\gamma}_s^{\top}\bar{\bm{X}}_{i,s}+\bm{\gamma}_t^{\top}\bar{\bm{X}}_{i,t})}\nonumber\\
&\hspace{8em}+e^{(1+\frac{\delta}{2})(\bm{\gamma}_s^{\top}\bar{\bm{X}}_{i,s}-\bm{\gamma}_t^{\top}\bar{\bm{X}}_{i,t})}+e^{(1+\frac{\delta}{2})(\bm{\gamma}_s^{\top}\bar{\bm{X}}_{i,s}+\bm{\gamma}_t^{\top}\bar{\bm{X}}_{i,t})}\Big\}\nonumber\\
&=O(n)=o(n^{1+\frac{\delta}{2}}).
\end{align}
Then, by (\ref{Cor}), (\ref{cor-mean}), (\ref{cor-var}), (\ref{cor-Lya}), and the central limit theorem, we have
$$\sqrt{n}\left( \hat{\rho}_{s,t}-\rho_{s,t}\right)\overset{d}\to N(0,\lambda_{s,t}).$$
\end{proof}
\begin{proof}[Proof of Theorem \ref{Th5}]
It is similar to the proof of Theorem \ref{Th2}.
\end{proof}
\begin{proof}[Proof of Theorem \ref{Th6}]
	It immediately follows from Theorems \ref{Th4} and \ref{Th5}.
\end{proof}

\bibliographystyle{apalike}
\bibliography{reference}

\end{document}